\newif\ifreport\reporttrue

\ifreport
\documentclass[conference]{IEEEtran}
\else
\documentclass[conference]{IEEEtran}
\fi
\addtolength{\textwidth}{0.3cm}
\IEEEoverridecommandlockouts

\makeatletter
\def\ps@headings{%
\def\@oddhead{\mbox{}\scriptsize\rightmark \hfil \thepage}%
\def\@evenhead{\scriptsize\thepage \hfil \leftmark\mbox{}}%
\def\@oddfoot{}%
\def\@evenfoot{}}
\makeatother
\pagestyle{headings}
\usepackage{amsfonts}
\usepackage{amsthm}
\usepackage{amsmath}
\usepackage{amssymb}
\usepackage{bm}
\usepackage{cite}
\usepackage{graphicx}
\usepackage[tight,footnotesize]{subfigure}
\newtheorem{lemma}{Lemma}

\newtheorem{Theorem}{Theorem}

\hyphenation{op-tical net-works semi-conduc-tor}
\makeatletter
\renewcommand{\maketag@@@}[1]{\hbox{\m@th\normalsize\normalfont#1}}%
\makeatother

\DeclareMathOperator*{\argmax}{argmax}

\usepackage{graphics,booktabs,color,epsfig,subfigure}

\begin{document}
\IEEEoverridecommandlockouts

\title{Scheduling of Multicast and Unicast Services under Limited Feedback by using Rateless Codes}

\author{Yin Sun$^\dag$, C. Emre Koksal$^\dag$, Kyu-Han Kim$^\ddag$, and Ness B. Shroff$^\dag{}^\S$\\
$^\dag$Dept. of ECE, $^\S$Dept. of CSE, The Ohio State University, Columbus, OH\\
$^\ddag$Hewlett Packard Laboratories, Palo Alto, CA
\thanks{This work has been supported in part by an IRP grant from HP.}
}


\maketitle

\begin{abstract}


Many opportunistic scheduling techniques are impractical because they require accurate channel state information (CSI) at the transmitter. In this paper, we investigate the scheduling of unicast and multicast services in a downlink network with a very limited amount of feedback information. Specifically, unicast users send imperfect (or no) CSI and infrequent acknowledgements (ACKs) to a base station, and multicast users only report infrequent ACKs to avoid feedback implosion. We consider the use of physical-layer rateless codes, which not only combats channel uncertainty, but also reduces the overhead of ACK feedback. A joint scheduling and power allocation scheme is developed to realize multiuser diversity gain for unicast service and multicast gain for multicast service. We prove that our scheme achieves a near-optimal throughput region. Our simulation results show that our scheme significantly improves the network throughput over schemes employing fixed-rate codes or using only unicast communications.
\vspace{-0.3cm}
\end{abstract}

%
%
%
%
%

\section{Introduction}\label{sec:intro}
Over the past decade, opportunistic scheduling techniques have been developed to improve the throughput of wireless networks. Many of these techniques require accurate channel state information (CSI) at the transmitter. However, obtaining this information is costly and could incur significant feedback overhead in practical wireless networks \cite{Love2008}. This issue is particularly serious for multicast services, where the number of users could be large and sending back each user's channel state may result in feedback implosion \cite{Anastasopoulos2012}.
Therefore, a key question is ``how to optimally manage the network resources and exploit the most from a limited amount of feedback information?''

In this paper, we aim to answer this question by jointly considering channel coding, scheduling, and power allocation for a downlink network with both unicast and multicast services, which is illustrated in Fig. \ref{fig11}. {In this network, the power and channel resources are shared among the unicast and multicast service flows. In addition, the throughput of multicast service can be enhanced by virtue of unicast retransmission after the multicast session \cite{MBMS_LTEA2012}.} The network adopts a practical limited feedback mechanism that is suggested by the LTE-Advanced MBMS standards \cite{Oyman 2010, Gruber2011,MBMS_LTEA2012}: The unicast users send imperfect CSI and report one-bit acknowledgements (ACKs) to the base station \cite{Oyman 2010}. However, the multicast users are only allowed to send ACKs to the base station\footnote{The ACK signaling procedure for multicast service is realized by a reception reporting mechanism defined in the LTE-Advanced MBMS standards \cite{MBMS_LTEA2012}.} without feeding back any CSI\footnote{Reporting the CSI of each multicast user is usually inefficient, because the feedback overhead could be enormous and the throughput of multicast service is determined by only the channel condition of the worst-case user \cite{Oyman 2010,Gruber2011}.}.

\begin{figure}
\centering
\includegraphics[width=2.2in]{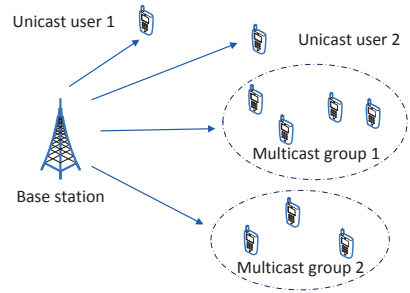}
\vspace{-0cm}
\caption{System model of the downlink network.} \label{fig11}
\vspace{-0.5cm}
\end{figure}

An important part of our answer lies in the adoption of \emph{rateless codes} \cite{Draper09}. Traditional fixed-rate codes, such as Turbo codes and LDPC codes, suffer from rate loss if there is channel uncertainty. Specifically, if the code rate is higher than the capacity, a decoding error occurs; if the code rate is lower than the capacity, the available capacity is not fully utilized. In contrast, rateless codes leverage automatic rate adaptation to combat channel uncertainty. As the transmitter progressively sends the coded packets of a rateless code to the receiver, the code rate decreases over time. Once the code rate drops below the empirical channel capacity, the receiver can decode the message and send an ACK to the transmitter. Therefore, the throughput of rateless codes is close to the empirical channel capacity \cite{Draper09}. Furthermore, rateless codes can significantly reduce the amount of ACK feedback \cite{Eswaran10}, because, unlike fixed-rate codes, they do not require an ACK/NACK for each packet.
\IEEEpubidadjcol

We develop a scheduling and power allocation scheme for unicast and multicast services, where rateless codes (e.g., Raptor codes \cite{Etesami06}, Strider \cite{Gudipati2011}, and Spinal codes \cite{Balakrishnan2012}) are employed in the physical layer to replace fixed-rate codes. While rateless codes can improve the network throughput and reduce ACK feedback, \emph{they significantly complicate the network control problem}. In rateless codes, since the transmission of one message could span over a random number of slots, message decoding and queue updates occur intermittently at irregular time intervals. However, the network control action needs to vary in each slot to achieve multiuser diversity gain for unicast users. Therefore, the scheduler may switch to another user with better channel quality before the previous user has decoded its message and transmitted back an ACK to update the queue state. \emph{This is quite different from traditional queueing systems for fixed-rate codes, e.g., \cite{NetworkResourceAllocationBook06,Lin06,Zhao_Jsac07,Murugesan12,Wenzhuo12}, where the control action and queue lengths are both updated periodically in each slot.}

This network control problem is challenging for two reasons: 1) the transmission procedure of rateless codes is correlated over time; 2) the network controller has very limited channel knowledge to decide the control action. In particular, traditional \emph{slot-level} Lyapunov drift techniques, e.g., \cite{NetworkResourceAllocationBook06,Neelybook10,Sun2013}, are not sufficient to establish the stability of our network control scheme based on rateless codes. To that end, the following are the contributions of our paper:
\begin{itemize}
\item We develop a low-complexity scheduling and power allocation scheme for rateless codes. Our scheme can fully exploit the imperfect CSI and infrequent ACK feedback to simultaneously realize multiuser diversity gain and multicast gain. Moreover, our scheme is quite robust against channel uncertainty. To the best of our knowledge, \emph{our network control scheme with rateless codes is the first that can simultaneously realize multiuser diversity gain, multicast gain, and robustness against channel uncertainty under limited feedback information.}

\item We show that our scheme can achieve a near-optimal stability region. In doing so, we utilize fluid limit techniques to resolve the temporal correlation in the transmission procedure of rateless codes.

\item To prevent one or several multicast users with poor channel conditions to bring down the throughput of the multicast group, we divide the multicast users according to their channel qualities and set up separate unicast sessions sending additional data to the users with bad channels. By this, the throughput of multicast communications is enhanced. The stability region of this combined delivery scheme is attained.

\end{itemize}

Before we proceed, we would like to emphasize that we are using rateless codes at the physical layer, as opposed to those that address the packet erasures (i.e., packet drops/errors) at the application layer, with fixed-rate codes being used in the physical layer, e.g, \cite{Calabuig2013}. In this setting, if one packet is not successfully decoded in the physical layer, it is discarded and does not contribute toward decoding of the application layer message. On the other hand, with physical layer rateless codes, each packet contributes toward information accumulation, even if the packet is not decodable by itself. The throughput performance of these two approaches is compared in Section \ref{sec:simulation}. 

\section{Related Work}
Opportunistic scheduling of unicast services based on rateless codes
have been extensively studied based on the assumption of accurate CSI (e.g., see the survey paper \cite{Lin06} and the references therein). Under channel uncertainty, ARQ-based feedback mechanisms for retrieving partial CSI have been developed for ON/OFF Markov channels, e.g., \ifreport\cite{Zhao_Jsac07,Murugesan12,Wenzhuo12,Li2013528}. 
\else\cite{Zhao_Jsac07,Murugesan12,Wenzhuo12}. 
\fi
These mechanisms rely on the time-correlation in the channels to retrieve knowledge of the channel states, and thus do not work well when the temporal correlation is low.

For multicast services, scheduling based on fixed-rate codes has been investigated in, e.g.,
\ifreport
\cite{Won09,Pantelidou10,Pantelidou11,Tsai11}.
\else
\cite{Won09,Pantelidou11}.
\fi
These schemes require the CSI of all the multicast users, which may result in a huge amount of feedback overhead. Without accurate CSI, such schemes may suffer from a significant throughput loss.

Scheduling of multicast transmissions using rateless codes was studied in \cite{Low10}, where a subset of multicast users were scheduled in each slot and the average system delay was minimized by resorting to extreme value theory. A joint scheduling and code length adaptation approach was developed for multicast services with hard deadlines by using dynamic programming \cite{Yang:2012}. Practical MAC protocols for WiFi networks based on rateless codes were proposed by virtue of dynamic programming \cite{Iannucci:2012} and channel prediction \cite{Gudipati:2012}. However, power allocation was not incorporated in these studies due to the inherent computational complexity. The throughput performance of multicast transmissions using rateless codes was studied in, e.g.,
\ifreport
\cite{Cogill2011,Yang2012,Swapna2013}.
\else
\cite{Cogill2011,Yang2012}.
\fi
However, these results only apply to erasure channels and are difficult to be extended to general fading channels.
A joint power allocation, scheduling, and message size adaptation approach was proposed for unicast service with rateless codes in \cite{Sun2013}. The analysis there cannot be extended to scenarios with multicast service; see Section \ref{sec:throughput1} for more details.

\section{System Model}\label{sec2}
We consider a time-slotted downlink cellular network with one base station, $U$ unicast users, and $G$ multicast user groups, as illustrated in Fig. \ref{fig11}. In each time slot, the base station can schedule only one unicast or multicast data flow. Let $u\in\{1,2,\cdots,U\}$ be the flow index of the unicast users and $g\in\{U+1,2,\cdots,U+G\}$ be the flow index of the multicast user groups. The scheduled flow at slot $t$ is denoted by $s(t)\in\mathcal {S}\triangleq\{1,2,\cdots,G+U\}$.

The channels are assumed to be block fading with a constant channel state within each slot, and vary from one slot to another.~Let $h_u(t)$ represent the channel state of unicast user $u$ at slot $t$ and $h_{gj}(t)$ represent the channel state of multicast user $j$ in the $g$th group with $j\in\{1,\cdots, J(g)\}$, where $J(g)$ is the number of  multicast users in the $g$th group. Each user has perfect knowledge of its downlink CSI via channel estimation. However, the base station only has access to an imperfect CSI
$\hat{\textbf{h}}(t)\triangleq\{\hat{h}_1(t),\cdots,\hat{h}_U(t)\}$ for the unicast users and the conditional probability distribution $f({h}_u|\hat{{h}}_u)$ of ${h}_u(t)$ for given value of $\hat{{h}}_u(t)$ \cite{Love2008,Goldsmith2003}, with no channel knowledge for the multicast users \cite{Oyman 2010}. This model has covered the special cases of no CSI feedback, i.e., $\hat{{h}}_u(t)$ is independent of ${h}_u(t)$, and perfect CSI feedback, i.e., $\hat{{h}}_u(t)={h}_u(t)$. We assume that $\{{h}_u(t), \hat{h}_u(t)\}$ and $h_{gj}(t)$ are \emph{i.i.d.} across time and independent across users, and there are a finite number of possible channel states due to digital quantization. The state space of the imperfect CSI $\hat{\textbf{h}}(t)$ is expressed as $\hat{\mathcal{H}}\triangleq\{\hat{\textbf{h}}_1,\cdots,\hat{\textbf{h}}_E\}$ with a stationary distribution $\bm \pi=\{\pi_1,\cdots,\pi_E\}$ which is unknown to the base station.

Let $P(t)\in\mathcal {P}$ be the transmission power of the base station at slot $t$, where $\mathcal {P}\triangleq\{P_1,P_2,\cdots,P_{O}\}$ is the set of possible transmission power values.
The downlink transmissions are subject to a time-average power constraint
{\small\begin{equation}\label{eq23}
\limsup_{T\rightarrow\infty}\frac{1}{T}\sum_{t=0}^{T-1}P(t)\leq P_{av},
\end{equation}}
where $P_{av}\in\mathcal {P}$ is the maximum average transmission power. When a unicast flow is scheduled, the transmission power $P(t)$ is determined from the imperfect CSI $\hat{\textbf{h}}(t)$ to exploit multiuser diversity gain. On the other hand, the transmission power of multicast services is simply fixed as $P(t)=P_{av}$, because the base station has no knowledge of the channels for the multicast users \cite{Oyman2010}. Let $\omega(t)\triangleq(P(t),s(t))\in\Omega$ denote the power allocation and scheduling control action at slot $t$, where $\Omega\triangleq\{\omega_1,\cdots, \omega_F\}\subset\mathcal {P}\times\mathcal {S}$ is the set of possible network control actions with $\omega_m=(P^{m},s^{m})$ and $F=UO+G$.

The mutual information for the downlink channel to the $u$th unicast user is denoted by $I(h_u,P)$, and that for the $j$th user in multicast user group $g$ is denoted by $I(h_{gj},P)$.~We assume that $I(h_u,P)$ and $I(h_{gj},P)$ are bounded by 
\begin{eqnarray}\label{eq1}
0\leq I(h_u,P)\leq I_{\max}, ~~0\leq I(h_{gj},P)\leq I_{\max},
\end{eqnarray}
where the upper bound $I_{\max}$ is due to the limited dynamic range of the received signal at the RF front end.

Let $\Phi_s(t)$ denote the number of arrival bits
at the data queue of flow $s$ at slot $t$. We assume that each arrival process $\Phi_s(t)$ is an independent irreducible positive recurrent Markov chain with countable state space and satisfies the Strong Law of Large Number (SLLN): That is, with probability one
\begin{eqnarray}\label{eq:SLLN-arrival}
\lim_{t\rightarrow \infty}\frac{\sum_{\tau=0}^t\Phi_s(\tau)}{t}=\lambda_s
\end{eqnarray}
for each flow $s$, where $\lambda_s$ is the mean arrival rate of flow $s$. We let $\bm{\lambda}\triangleq(\lambda_1,\cdots,\lambda_{G+U})$ denote arrival rate vector.


\section{Queueing System For Rateless Codes}\label{sec:queue_system}

\subsection{Reception Procedure of Rateless Codes}
The encoder of rateless codes can generate an unlimited number of coded packages from a given payload message. In each slot, the transmitter sends out one coded packet to the receiver. In practice, the decoding instant of rateless codes can be determined in the following way \cite{Etesami06,Draper09,Eswaran10}: The receiver records the mutual information corresponding to each coded packet, which represents the reliability of the coded packet. 
When the accumulated mutual information exceeds $M(1+\epsilon)$ bits, the receiver can decode the message with a high probability, where $M$ is the bit size of the message, $\epsilon\geq0$ is a constant called the \emph{reception overhead}. A rateless code is said to be ``good'', if $\epsilon$ is close to zero. When $M$ is large enough, $\epsilon$ is small for many practical physical-layer rateless codes, such as Raptor codes \cite{Etesami06}, Strider \cite{Gudipati2011}, and Spinal codes \cite{Balakrishnan2012}.
%
%
\subsection{Unicast Service}\label{sec:queue_unicast}
\subsubsection{Queue Updates}
Let $R_u(t)$ denote the accumulated mutual information of user $u$ for decoding the latest message. Then, the evolutions of $R_u(t)$ are given by
{\small\begin{equation}\label{eq82}
R_u(t+1)\! =\!\left\{\!\!\!\begin{array}{l l} R_u(t),&\!\!\!\!\!\!\!\!\!\!\!\!\!\!\!\!\!\!\!\!\!\!\!\!\!\!\!\!\!\!\!\!\!\!\!\!\!\!\!\!\!\!\!\!\!\!\!\!\! \textrm{if}~s(t)\neq u;\\
R_u(t)\!+\!I(h_u(t),P(t))K,\!\!\!\!\!~~&\\
&\!\!\!\!\!\!\!\!\!\!\!\!\!\!\!\!\!\!\!\!\!\!\!\!\!\!\!\!\!\!\!\!\!\!\!\!\!\!\!\!\!\!\!\!\!\!\!\!\!\textrm{if~} s(t)= u \textrm{~and}\\
&\!\!\!\!\!\!\!\!\!\!\!\!\!\!\!\!\!\!\!\!\!\!\!\!\!\!\!\!\!\!\!\!\!\!\!\!\!\!\!\!\!\!\!\!\!\!\!\!\!~R_u(t)\!+\! I(h_u(t),P(t))K\!<\!M_u(1\!+\!\epsilon);\\
0, &\!\!\!\!\!\!\!\!\!\!\!\!\!\!\!\!\!\!\!\!\!\!\!\!\!\!\!\!\!\!\!\!\!\!\!\!\!\!\!\!\!\!\!\!\!\!\!\!\! \textrm{otherwise},
\end{array}\right.
\end{equation}}
where $K$ is the number of channel symbols in each slot and $M_u$ is the bit size of each message for user $u$. Hence, $R_u(t)$ increases when user $u$ is scheduled, and is reset to zero when user $u$ can decode the message with a high probability.

Since channel decoding and queue update occur intermittently, we define a \emph{queue update variable $a(t)$}: if user $u$ can decode the message, i.e., if $s(t)=u$ and $R_u(t)+
I(h_u(t),P(t))K\geq M_u(1+\epsilon)$, it reports an ACK such that $a(t)=u$; otherwise, $a(t)\neq u$.
Let $Q_u$ be the data queue of unicast user $u$ at the transmitter. The evolutions of $Q_u$ are
\begin{equation}\label{eq85}
Q_u(t+1)=(Q_u(t)-M_u1_{\{a(t) =u \}})^++\Phi_u(t),
\end{equation}
where $(\cdot)^+=\max\{\cdot,0\}$ and $1_{\{A\}}$ is the indicator function of event $A$.
\subsubsection{Code Length}
Let $n_u(t)$ denote the code index of user $u$ at slot $t$, which evolves as
\begin{equation}\label{eq:22}
n_u(t+1) = n_u(t) +1_{\{a(t) =u\}}.
\end{equation}
Define $t_{n,u}\triangleq\min\{t\geq0:n_u(t)=n\}$ as the first time slot such that $n_u(t)=n$.
According to the decoding rule of rateless codes, the code length (i.e., the number of coded packets or transmission time slots) for the $n$th rateless code of unicast user $u$ is (see also \cite[Eq. (11)]{Draper09})
{\small\begin{eqnarray}\label{eq83}
&&\!\!\!\!\!\!\!\!\!\!\!\!\!\!L_u(n)=\! \min\!\left\{\sum_{t=t_{n,u}}^{t_{n,u}+l-1}1_{\{s(t)=u\}}\right.: \nonumber\\
&&\!\!\!\!\!\!\!\!\!\!\!\!\!\!~~\left.\!\sum_{t=t_{n,u}}^{t_{n,u}+l-1} \!\!\! 1_{\{s(t)=u\}} I(h_u(t),P(t))K\!\geq\! M_u(1+\epsilon),~l\geq1\!\!\right\}\!,~
\end{eqnarray}}
where $\sum_{t=t_1}^{t_2}1_{\{s(t)=u\}}$ represents the number of serving slots of user $u$ between slots $t_1$ and $t_2$. Note that the part of mutual information overshooting the message size, i.e. $M_u$ bits, results in a small rate loss from the empirical channel capacity.
%
%
We omit $\epsilon$ in the rest of the paper for notational simplicity.~Nevertheless,
one can always divide $I(h_u,P)$ by $(1+\epsilon)$ to derive the result
for non-zero $\epsilon$.

\subsection{Multicast Service}\label{sec:multicast_queue}
\subsubsection{Queue Updates}
Let $R_{gj}(t)$ denote the accumulated mutual information that the $j$th user in multicast group $g$ has collected for decoding the latest message. The evolutions of $R_{gj}(t)$ are determined by
{\small\begin{equation}\label{eq182}
R_{gj}(t+1)\! =\!\left\{\!\!\!\begin{array}{l l} R_{gj}(t),&\!\!\!\!\!\!\!\!\!\!\!\!\!\!\!\!\!\!\!\!\!\!\!\!\!\!\!\!\!\!\!\!\!\!\!\!\!\!\!\!\!\!\!\!\! \textrm{if}~s(t)\neq g;\\
R_{gj}(t)\!+\!I(h_{gj}(t),P_{av})K,\!\!\!\!\!\\
&\!\!\!\!\!\!\!\!\!\!\!\!\!\!\!\!\!\!\!\!\!\!\!\!\!\!\!\!\!\!\!\!\!\!\!\!\!\!\!\!\!\!\!\!\!\textrm{if}~s(t)= g~\textrm{and~} \exists~l\in\{1,\cdots,J(g)\},\\
&\!\!\!\!\!\!\!\!\!\!\!\!\!\!\!\!\!\!\!\!\!\!\!\!\!\!\!\!\!\!\!\!\!\!\!\!\!\!\!\!\!\!\!\!\!~~~R_{gl}(t)\!+\! I(h_{gl}(t),P_{av})K\!<\!M_{g};\\
0, & \!\!\!\!\!\!\!\!\!\!\!\!\!\!\!\!\!\!\!\!\!\!\!\!\!\!\!\!\!\!\!\!\!\!\!\!\!\!\!\!\!\!\!\!\!\textrm{otherwise},
\end{array}\right.\end{equation}}
where $M_{g}$ is the bit size of each message for group $g$.
Therefore, $R_{gj}(t)$ increases when the multicast group $g$ is scheduled, and is reset to zero when all the $J(g)$ multicast users can decode the message.

The queue update variable $a(t)$ for the multicast flows is determined as follows: Each multicast user reports an ACK to the base station when it has collected enough mutual information to decode the message. If all the $J(g)$ multicast users in group $g$ have sent back their ACKs by the end of slot $t$, i.e., $s(t)=g$ and $R_{gj}(t)+$ $
I(h_{gj}(t),P_{av})K\geq M_g$ for all $j\in\{1,\cdots,J(g)\}$, 
then $a(t)=g$; otherwise, $a(t)\neq g$.~Let $Q_g$ be the data queue of multicast user group $g$. The evolutions of $Q_g$ are given by
\begin{equation}\label{eq185}
Q_g(t+1)=(Q_g(t)-M_g1_{\{a(t) =g \}})^++\Phi_g(t).
\end{equation}
\subsubsection{Code Length}
Let $n_g(t)$ be the code index of multicast user group $g$, which evolves as
\begin{equation}\label{eq:20}
n_g(t+1) = n_g(t) +1_{\{a(t) =g \}}.
\end{equation}
Define $t_{n,g}\triangleq\min\{t\geq0:n_g(t)=n\}$ as the first time slot such that $n_g(t)=n$.
The code length for the $n$th rateless code of multicast user group $g$ is determined as
{\small\begin{eqnarray}\label{eq:blocklength-multicast}
&&\!\!\!\!\!\!\!\!\!\!\!\!\!\!L_g(n)=\max_{j\in\{1,\cdots,J(g)\}} L_{gj}(n),
\end{eqnarray}}
where $L_{gj}(n)$ is the number of coded packets for the $j$th multicast user in group $g$ to decode the $n$th message, i.e.,
{\small\begin{eqnarray}\label{eq:1}
&&\!\!\!\!\!\!\!\!\!\!\!\!\!\!L_{gj}(n)=\! \min\!\left\{\sum_{t=t_{n,g}}^{t_{n,g}+l-1}1_{\{s(t)=g\}}\right.: \nonumber\\
&&\!\!\!\!\!\!\!\!\!\!\!\!\!\!~~~~\left.\sum_{t=t_{n,g}}^{t_{n,g}+l-1} \!\! 1_{\{s(t)=g\}} I(h_{gj}(t),P_{av})K\geq M_g,~l\geq1\right\}\!.~~~~
\end{eqnarray}}

Since $h_{gj}(t)$ and $I(h_{gj}(t),P_{av})$ are \emph{i.i.d.} over time, they are also \emph{i.i.d.} in the slots when the $g$th group is scheduled. Furthermore, $L_g(n)$ is a stopping time that the accumulated mutual information $R_{gj}(t)$ exceeds $M_g$ bits for all the $J(g)$ multicast users. By using the property of stopping times \cite{Durrettbook10}, $L_g(n)$ are \emph{i.i.d.} for different code indices $n$ and satisfy SLLN:
{\small\begin{eqnarray}\label{eq:blocklength-SLLN1}
\lim_{N\rightarrow \infty}\frac{\sum_{n=1}^{N}L_g(n)}{N}\triangleq\overline{L}_g,
\end{eqnarray}}
where $\overline{L}_g$ is the average code length of the $g$th multicast user group that is irrelative to the network control algorithm. The multicast capacity of group $g$ is $M_g/\overline{L}_g$ bits/slot \cite{Cogill2011}.

\section{Network Control for Rateless Codes} \label{sec:algorithm}
In this section, we propose a scheduling and power allocation scheme for unicast and multicast services with rateless codes. We show that this scheme achieves a near-optimal throughput region.

\subsection{Network Control Algorithm}
Let us define a virtual queue $Z$ for the time-average power constraint \eqref{eq23}, which evolves as
\begin{eqnarray}\label{eq86}
&&Z(t+1) = (Z(t)-P_{av})^++ P(t).\label{eq71}
\end{eqnarray}

We now propose a scheduling and power allocation algorithm, which only requires imperfect CSI and infrequent ACK feedback to control the wireless transmissions:\\
\underline{\textbf{Network Control For Rateless Codes (NC-RC)}}
\begin{itemize}
\item {\bf Scheduling:}
The scheduling decision at slot $t$ is
{\small\begin{eqnarray}
\label{eq96}
&&\!\!\!\!\!\!\!\!\!\!\!\!s(t) = \argmax_{s\in\mathcal {S}} {Q}_s(t) I_s(t)-Z(t)P_s(t),~~~~~~
\end{eqnarray}}
where $I_s(t)$ and $P_s(t)$ are given by
{\small\begin{eqnarray}\label{eq-rate}
&&\!\!\!\!\!\!\!\!\!\!\!\!\!I_u(t)=\mathbb{E}\{I(h_u(t),P_u(t))K|\hat{h}_u(t)\}\frac{M_u}{M_u\!+\!I_{\max}K},\nonumber\\
&&\!\!\!\!\!\!\!\!\!\!\!\!\!I_g(t)=\left\{\begin{array}{l l}\frac{n_g(t)M_g}{\sum_{n=1}^{n_g(t)-1}L_g(n)}& \textrm{,~if~} n_g(t)>1;\\
I_{\max}K& \textrm{,~if~}n_g(t)=1,\end{array}\right.\nonumber\\
&&\!\!\!\!\!\!\!\!\!\!\!\!\!P_u(t)=\argmax\limits_{P\in\mathcal {P}}Q_u(t)\mathbb{E}\{I(h_u(t),P)K|\hat{h}_u(t)\}\nonumber\\
&&~~~~~~~~~~~~~~\times\frac{M_u}{M_u+I_{\max}K}-Z(t)P,\nonumber\\
&&\!\!\!\!\!\!\!\!\!\!\!\!\!P_g(t)=P_{av},
\end{eqnarray}}
$\!\!\!$for $u\in\{1,\cdots,U\}$ and $g\in\{U+1,\cdots,U+G\}$, and $\mathbb{E}\{X|Y\}$ represents the conditional expectation of  $X$ for a given value of $Y$.

\item {\bf Power allocation:}
The transmission power at slot $t$ is
{\small\begin{eqnarray} \label{eq:power}
P(t)=P_{s(t)}(t),
\end{eqnarray}}
where $P_u(t)$ and $P_g(t)$ are defined in \eqref{eq-rate}.

\item {\bf Queue update:}
Update the queues $R_u(t)$,
$Q_u(t)$, $R_{gj}(t)$,
$Q_g(t)$, and $Z(t)$
according to \eqref{eq82}, \eqref{eq85}, \eqref{eq182}, \eqref{eq185}, and \eqref{eq86}, respectively.
\end{itemize}

In this algorithm, the unicast service rate $I_u(t)$ is the ergodic capacity of user $u$ with a rate loss factor of $M_u/(M_u+I_{\max}K)$. This rate loss is caused by the part of mutual information overshooting the message size $M_u$. The multicast service rate $I_g(t)$ is attained by using the ACKs to track the empirical average throughput of historical transmissions. In particular, $I_g(t)$ converges to the multicast capacity $M_g/\overline{L}_g$ as $t$ grows.

\subsection{Throughput Region}\label{sec:throughput1}
As in \cite{Andrews04}, a queueing network is said to be \emph{stable}, if the underlying Markov chain is positive recurrent.
A stable throughput region of the proposed network control algorithm NC-RC is stated as follows:
\begin{Theorem}\label{prop1}
The network is stable under NC-RC for any arrival rate vector $\bm{\lambda}$ strictly inside $\Lambda$,
where $\Lambda$ is given by
{\small\begin{eqnarray}\label{eq:region}
&&\!\!\!\!\!\!\!\!\!\!\!\!\!\!\!\Lambda=\bigg\{\bm{\lambda}\bigg| \textrm{There exist } \alpha_{mi}\geq0,~\textrm{such that} \nonumber\\  &&\!\!\!\!\!\!\!\!\!\!\!0\leq \lambda_u\leq\!\!\!\!\sum_{m:s^m=u}\!\sum_{i=1}^E\mathbb{E}\{I(h_u,P^m)K|\hat{\mathbf{h}}_i\}\frac{M_u}{M_u+I_{\max}K}\alpha_{mi}\pi_i,\nonumber\\
&&\!\!\!\!\!\!\!\!\!\!\!0\leq\lambda_g\leq\frac{M_g}{\overline{L}_g}\sum_{m:s^m=g}\!\sum_{i=1}^E \alpha_{mi}\pi_i ,\nonumber\\ &&\!\!\!\!\!\!\!\!\!\!\!0\leq\sum_{m=1}^F\!\sum_{i=1}^EP^m\alpha_{mi}\pi_i\leq P_{av},~\sum_{m=1}^F\alpha_{mi}=1,\forall~i \bigg\},~~
\end{eqnarray}}
$\!\!\!$where $\overline{L}_g$ is defined in \eqref{eq:blocklength-SLLN1}, $P^m$ and $s^m$ are the power allocation and scheduling decisions associated with action $\omega_m$, and $\pi_i$ is the stationary probability of the channel state $\hat{\mathbf{h}}_i$, i.e., $\pi_i=\Pr\{\hat{\mathbf{h}}(t)=\hat{\mathbf{h}}_i\}$.
\end{Theorem}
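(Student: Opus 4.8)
The plan is to combine a quadratic Lyapunov argument with fluid-limit techniques. A slot-by-slot drift analysis is not available here because the departure process of a rateless code is spread over a random and temporally correlated number of slots, so instead I would show that the fluid model associated with NC-RC is stable and then invoke the standard result that stability of the fluid model implies positive recurrence of the underlying Markov chain \cite{Andrews04}. The network state is $\mathbf X(t)=(\{Q_s(t)\},\{R_u(t)\},\{R_{gj}(t)\},Z(t),\{n_g(t)\},\ldots)$; note that each $R_u(t)$ stays below $M_u+I_{\max}K$ and each $R_{gj}(t)$ below $M_g+I_{\max}K$ before being reset, so only $(\{Q_s\},Z)$ are unbounded, and I would use the Lyapunov function $V(t)=\sum_{s}Q_s(t)^2+Z(t)^2$.

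The second step is to construct the fluid limits. I would rescale time and space by $r$, set $\bar Q^{(r)}_s(t)=Q_s(rt)/r$, $\bar Z^{(r)}(t)=Z(rt)/r$, and similarly rescale the cumulative arrival process, the cumulative serving-slot counts $T_s(t)=\sum_{\tau<t}1_{\{s(\tau)=s\}}$, and the message-completion counts $n_u(t),n_g(t)$. Using the SLLN for the arrivals \eqref{eq:SLLN-arrival}, the i.i.d.\ structure of $\hat{\mathbf h}(t)$ and $h_{gj}(t)$, and---crucially---the fact established around \eqref{eq:blocklength-SLLN1} that the multicast code lengths $L_g(n)$ are i.i.d.\ stopping times obeying the SLLN, the rescaled processes are almost surely Lipschitz, so by Arzel\`a--Ascoli every sequence $r\to\infty$ has a subsequence converging uniformly on compact sets to a Lipschitz (hence a.e.\ differentiable) fluid limit. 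I would then identify the equations this limit satisfies: with $\dot T_s(t)$ the instantaneous fraction of service given to flow $s$, the unicast departure rate from $q_u$ equals $\dot T_u(t)\sum_i\pi_i\,\mathbb E\{I(h_u,P)K\,|\,\hat{\mathbf h}_i\}\,M_u/(M_u+I_{\max}K)$, where the rate-loss factor records the bounded overshoot of $R_u$ at each reset; the multicast departure rate from $q_g$ equals $\dot T_g(t)\,M_g/\overline L_g$, using the convergence $I_g(t)\to M_g/\overline L_g$ implied by \eqref{eq:blocklength-SLLN1}; and the power virtual queue obeys $\dot z=\sum_s\dot T_sP_s-P_{av}$ with $P_s$ as in \eqref{eq-rate}.

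The third step is the drift computation on the fluid model. Since $\bm\lambda$ lies strictly inside $\Lambda$, there are coefficients $\alpha_{mi}$ as in \eqref{eq:region} and a $\delta>0$ such that the randomized stationary action ``use $\omega_m$ with probability $\alpha_{mi}$ in channel state $\hat{\mathbf h}_i$'' would serve each flow at rate at least $\lambda_s+\delta$ while using average power at most $P_{av}-\delta$. Because the NC-RC rules \eqref{eq96} and \eqref{eq:power} pointwise select the action maximizing $\sum_sQ_sI_s-ZP$, a standard max-weight comparison at each regular time $t$ yields $\frac{d}{dt}V(t)\le -2\delta\big(\sum_sq_s(t)+z(t)\big)\le -2\delta\sqrt{V(t)}$, so $V(t)=0$ for all $t\ge\sqrt{V(0)}/\delta$. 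This establishes stability of the fluid model and hence positive recurrence under NC-RC.

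The main obstacle, and the point where the rateless structure really bites, is the second step: justifying the fluid departure-rate equations for the bursty, irregularly spaced message completions. One must show that $M_un_u(rt)/r$ converges to the stated time integral and that $M_gn_g(rt)/r$ converges to the integral of $\dot T_g M_g/\overline L_g$ even though the $R$-queues are never reset in step with the slot clock; this needs the SLLN for $L_g(n)$ together with careful bookkeeping of the residual mutual information carried in $R_u$ and $R_{gj}$ at the scaling instants, and a proof that the running estimate $I_g(t)$ in \eqref{eq-rate} converges along the fluid limit. The boundedness of the $R$-queues is precisely what makes these residual terms negligible after dividing by $r$, which is why it is worth recording in the first step.
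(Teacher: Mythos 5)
Your proposal follows essentially the same route as the paper's proof: a Markov description of the queueing system, fluid limits obtained via the SLLN and Arzel\`a--Ascoli (Lemmas \ref{lem1}--\ref{lem2}), identification of the fluid service rates with the rate-loss factor $M_u/(M_u+I_{\max}K)$ for unicast and $M_g/\overline{L}_g$ for multicast (Lemma \ref{lem3}), a quadratic Lyapunov drift comparison of the max-weight choice against a randomized stationary policy with slack (Lemma \ref{lem5}), and the Malyshev--Menshikov-type criterion of \cite{Andrews04} (Lemma \ref{lem4}) to conclude positive recurrence. Two minor inaccuracies do not affect the argument: $R_{gj}(t)$ is not bounded by $M_g+I_{\max}K$ (it is reset only when the whole group has decoded), though its fluid limit is still zero because $L_g(n)$ has finite mean, and the unicast fluid departure rate is only lower-bounded by (not equal to) the expression carrying the rate-loss factor, which is all the drift argument needs.
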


In \eqref{eq:region}, the unicast throughput $\lambda_u$ is upper bounded by the ergodic capacity of user $u$ multiplied with a rate loss factor $M_u/(M_u+I_{\max}K)$ and a time-sharing variable $\alpha_{mi}$ for selecting action $\omega_m$ when the imperfect CSI is $\hat{\mathbf{h}}(t)=\hat{\mathbf{h}}_i$. The multicast throughput $\lambda_g$ is upper bounded by the multicast capacity of group $g$ multiplied with a time-sharing variable $\sum_{i=1}^E \alpha_{mi}\pi_i$ for scheduling group $g$. Therefore, the achievable throughput region $\Lambda$ is as least $\min_{u}\{M_u/(M_u+I_{\max}K)\}$ of the optimal throughput region, without incurring excessive transmission delay. Note that the NC-RC scheme achieves $\Lambda$ with no prior knowledge of the channel distribution $\pi_i$.


The key difficulty in proving Theorem \ref{prop1} is that the temporal correlation in the transmission procedure of rateless codes. Owing to this, traditional \emph{slot-level} Lyapunov drift techniques \cite{NetworkResourceAllocationBook06} are not sufficient to prove Theorem \ref{prop1}. This time-correlation issue was resolved in \cite{Sun2013} for unicast service by constructing a capacity-achieving scheme with \emph{i.i.d.} transmission rates across time slots and evaluating the throughput difference between the two schemes. However, this method cannot be used to analyze the multicast throughput of our NC-RC algorithm, because the transmission rate $I_g(t)$ in \eqref{eq-rate} is non-\emph{i.i.d.} across time.
In this paper, we utilize fluid limit techniques
\ifreport
\cite{Chen95fluidapproximations,Dai95,Dai95-2,Rybko92,Stolyar95,Andrews04}
\else
\cite{Andrews04}
\fi
to prove Theorem \ref{prop1}. Since the fluid limit functions are deterministic, the service rates in the fluid limits are deterministic without time-correlation. By this, the time-correlation issue is resolved.

The details of the proof are provided in Appendix \ref{sec:analysis}. In the following we provide an outline of the proof. We first show  that the queueing system can be described as a Markov chain after adding some extra state variables, and then establish the fluid limits of this Markov chain. According to \eqref{eq83}, the accumulated mutual information of user $u$ is smaller than $M_u$ before the last coded packet is received, that is
\begin{eqnarray}\label{eq:proof311}
\sum_{t=t_{n,u}}^{t_{n,u}+l-1} \!\! 1_{\{s(t)=u\}} I(h_u(t),P(t))K< M_u,
\end{eqnarray}
where $l$ satisfies $\sum_{t=t_{n,u}}^{t_{n,u}+l-1}1_{\{s(t)=u\}}=L_u(n)-1$. Using this, we can obtain a lower bound of the queue service rate in the fluid limits (Lemma \ref{lem3}). Then, by using a Lyapunov drift techniques for the fluid limits and the stability criteria stated in Lemma \ref{lem4} for discrete-time countable Markov chains, we can establish the stability of the original queue system when the arrival rate vector $\bm{\lambda}$ is strictly inside $\Lambda$.

\section{Multicast Throughput Enhancement by Unicast Retransmissions}\label{sec:uni&multi}

Let us consider the procedure of transmitting a rateless code to a group of multicast users. At the beginning, all the multicast users in the group can attain useful mutual information and the multicast gain is high. However, as more and more users have decoded the message and stopped receiving packets, the multicast gain decreases. In the end, only one or several users with poor channel quality are still receiving packets and the multicast gain becomes quite small, which significantly degrades the throughput of multicast service. In this section, we analyze the stability region of a combined delivery strategy, which first delivers the message to most users in a multicast session, and at certain time switches to unicast retransmissions \cite{MBMS_LTEA2012} to send additional coded packets to the users with poor channel conditions using higher transmission power.
\subsection{User Partition}
Suppose that the base station needs to deliver messages to the users within the set $\mathcal {L}_g$ in the multicast session, where $\mathcal {L}_g\subset\{1,\cdots,J(g)\}$. In order to improve the throughput of the multicast session, the users within $\mathcal {L}_g$ should have better channel quality than the other users in the $g$th group. The base station can attain the channel quality of each multicast user by evaluating its average throughput. In particular, the average throughput of user $j$ in the $g$th multicast group is
{\small\begin{eqnarray}
\overline{I}_{gj}\triangleq \lim_{N\rightarrow \infty } \frac{N M_g}{\sum_{n=1}^NL_{gj}(n)},
\end{eqnarray}}
where $L_{gj}(n)$ is defined in \eqref{eq:1}. Without loss of generality, we assume that the multicast users are sorted in the descending order of their average throughput, i.e., $\overline{I}_{g1}\geq \overline{I}_{g2}\geq\cdots\geq \overline{I}_{gJ(g)}$. Then, the set $\mathcal {L}_g$ can be expressed as $\mathcal {L}_g=\{1,\cdots, l(g)\}$ with $l(g)\leq J(g)$.
In practice, the number of multicast users $l(g)$ within $\mathcal {L}_g$ is attained from system requirements and user service experience. For example, the LTE-Advanced MBMS standards require to cover a percentage (e.g., $95\%$) of the users by multicast delivery \cite{MBMS_LTEA2012}.
\subsection{Queueing System}
Similar to \eqref{eq:blocklength-multicast}, the service duration for the $n$th multicast session of the $g$th group is
\begin{eqnarray}
L_{g}(n,l(g))=\max_{j\in\{1,\cdots,l(g)\}} L_{gj}(n).
\end{eqnarray}
Recall that $I(h_{gj}(t),P_{av})$ is \emph{i.i.d.} in the slots when the $g$th group is scheduled.
Since $L_{g}(n,l(g))$ is a stopping time that the multicast users within $\mathcal {L}_g$ have accumulated enough mutual information to decode the $n$th message, one can attain that $L_{g}(n,l(g))$ is \emph{i.i.d.} for different $n$ and satisfies SLLN:
\begin{eqnarray}\label{eq:blocklength-SLLN}
\lim_{N\rightarrow \infty}\frac{\sum_{n=1}^{N}L_{g}(n,l(g))}{N}\triangleq\overline{L}_g(l(g)),
\end{eqnarray}
where $\overline{L}_g(l(g))$ is the average service duration that is irrelative to the network control actions.
Therefore, the throughput of the multicast session is given by $M_g/\overline{L}_g(l(g))$ bits/slot.

The multicast users $j\in\{l(g)+1,\cdots, J(g)\}$ in the $g$th group require extra unicast sessions to improve the throughput performance.
Let $v(g,j)$ denote the extra unicast flow for the $j$th multicast user in the $g$th group, where $j\in\{l(g)+1,\cdots, J(g)\}$. We use $R^\ast_{gj}(n)\in[0,M_g]$ to represent the amount of mutual information that user $j$ of the $g$th group has collected in the $n$th multicast session. Then, the left $M_v(n)\triangleq M_g-R^\ast_{gj}(n)$ bits of mutual information needs to be retrieved in latter unicast file repair sessions. Note that $R^\ast_{gj}(n)$ and $M_v(n)$ are both \emph{i.i.d.} over $n$. Let us define $\eta_{gj}$ as the ratio of mutual information obtained in multicast sessions, i.e.,
\begin{eqnarray}\label{eq:8}
\eta_{gj}\triangleq\lim_{N\rightarrow \infty } \frac{\sum\limits_{n=1}^NR^\ast_{gj}(n)}{NM_g}.
\end{eqnarray}

In the unicast retransmission sessions, each receiver sends its imperfect CSI to the base station like other unicast users.
We define $s(t)=v$ as the event that the $v$th unicast retransmission flow is scheduled, and $a(t)=v$ as the event that a message is decoded by the receiver of the $v$th unicast retransmission flow.
The base station maintains an extra queue $Q_{v}$ for the $v$th unicast retransmission flow.
The evolutions of $Q_{v}$ are given by
\begin{eqnarray}\label{eq:44}
&&\!\!\!\!\!\!\!\!\!\!\!\!Q_{v}(t\!+\!1)\!=\!\big[Q_{v}(t)\!-\! M_v(n_v(t))1_{\{a(t)=v\}}\nonumber\\
&&~~~~~~~~~~~~~~-R^\ast_{gj}(n_g(t))1_{\{a(t)=g\}}\big]^++ \Phi_g(t),
\end{eqnarray}
where $v=v(g,j)$, $n_g(t)$ is defined in \eqref{eq:20}, and $n_v(t)$ is index of the latest message in the $v$th unicast flow similar to \eqref{eq:22}.



\begin{figure}
\centering
\includegraphics[width=2.2in]{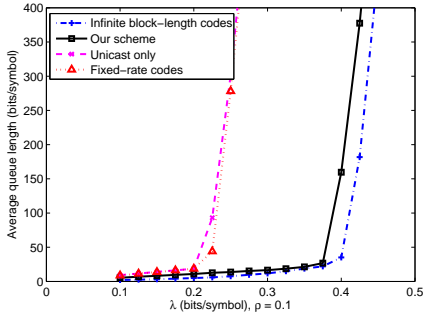}
\vspace{-0.1cm}
\caption{Simulation results of average queue length versus the traffic load $\lambda$ for $\rho = 0.1$ and $SNR=10$dB.} \label{fig1}
\vspace{-0.2cm}
\end{figure}

\subsection{Throughput Region}
We use the NC-RC algorithm to determine the scheduling and power allocation actions for this combined delivery strategy, where $J(g)$ is replaced by $l(g)$, $Q_{v}(t)$ is updated according to \eqref{eq:44}, and $I_v(t)$ is in the same form of $I_u(t)$.

A stable throughput region of the combined delivery strategy is stated as follows:
\begin{Theorem}\label{thm2}
The network with the combined delivery strategy is stable under NC-RC for any arrival rate vector $\bm{\lambda}$ strictly inside $\Lambda_2$, where $\Lambda_2$ is given by
{\small\begin{eqnarray}\label{eq:region1}
&&\!\!\!\!\!\!\!\!\!\!\!\!\Lambda_2=\bigg\{\bm{\lambda}\bigg| \textrm{There exist } \alpha_{mi}\geq0,~\textrm{such that} \nonumber\\  &&\!\!\!\!\!\!\!\!\!\!\!0\leq \lambda_u\leq\!\!\!\!\sum_{m:s^m=u}\!\sum_{i=1}^E\mathbb{E}\{I(h_u,P^m)K|\hat{\mathbf{h}}_i\}\frac{M_u}{M_u+I_{\max}K}\alpha_{mi}\pi_i,\nonumber\\
&&\!\!\!\!\!\!\!\!\!\!\!0\leq\lambda_g\leq\frac{M_g}{\overline{L}_g(l(g))}\sum_{m:s^m=g}\!\sum_{i=1}^E\alpha_{mi}\pi_i ,\nonumber\\
&&\!\!\!\!\!\!\!\!\!\!\!0\leq\lambda_g\leq\!\!\!\sum_{m:s^m=v(g,j)}\!\sum_{i=1}^E\mathbb{E}\{I(h_{gj},P^m)K|\hat{\mathbf{h}}_i\}\frac{(1-\eta_{gj})M_g}{(1\!-\!\eta_{gj})M_g\!+\!I_{\max}K}\nonumber\\
&&\!\!\!\!\!\!\!\!\!\!\!~~~~\times\alpha_{mi}\pi_i+\frac{\eta_{gj}M_g}{\overline{L}_g(l(g))}\!\!\sum_{m:s^m=g}\!\sum_{i=1}^E\alpha_{mi}\pi_i,j=l(g)+1,\cdots,J(g),\nonumber\\
&&\!\!\!\!\!\!\!\!\!\!\!0\leq\sum_{m=1}^F\!\sum_{i=1}^EP^m\alpha_{mi}\pi_i\leq P_{av},~\sum_{m=1}^{M}\alpha_{mi}=1,\forall~i \bigg\},~~
\end{eqnarray}}
$\!\!$where $\overline{L}_g(l(g))$ and $\eta_{gj}$ are defined in \eqref{eq:blocklength-SLLN}, and \eqref{eq:8}, respectively, $P^m$ and $s^m$ are the power allocation and scheduling decisions associated with action $\omega_m$, $\hat{\mathbf{h}}_i$ is the imperfect CSI state for unicast users and the multicast users with file repair.
\end{Theorem}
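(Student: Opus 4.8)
The plan is to establish Theorem~\ref{thm2} by reusing, essentially verbatim, the fluid-limit machinery developed for Theorem~\ref{prop1} in Appendix~\ref{sec:analysis}, and to isolate the two genuinely new features. The first is that each multicast group is now served only for the users in $\mathcal{L}_g$, so the relevant i.i.d.\ session length is $L_g(n,l(g))$ with SLLN mean $\overline{L}_g(l(g))$ rather than $\overline{L}_g$; this change is harmless because the SLLN \eqref{eq:blocklength-SLLN} is already in hand. The second, and substantive, feature is that each retransmission queue $Q_{v(g,j)}$ is coupled to the multicast queue $Q_g$ through the term $R^\ast_{gj}(n_g(t))1_{\{a(t)=g\}}$ in \eqref{eq:44}, and that the same exogenous arrivals $\Phi_g(t)$ feed both $Q_g$ and $Q_v$, so both of these queues must be shown stable at fluid arrival rate $\lambda_g$ --- which is exactly why $\Lambda_2$ imposes a separate drain condition on each.

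First I would augment the state to a countable, irreducible Markov chain $X(t)$ as in the proof of Theorem~\ref{prop1}: the data queues $Q_u,Q_g,Q_v$, the accumulated-information variables $R_u,R_{gj}$ and their retransmission-flow counterparts, the code indices $n_u,n_g,n_v$, the virtual power queue $Z$, the arrival-chain states, and the imperfect CSI $\hat{\mathbf{h}}(t)$; in addition, for each active $g$-session I would carry the partial amounts $R^\ast_{gj}$ of the users $j\notin\mathcal{L}_g$ and, once that session completes, the message size $M_v(n)=M_g-R^\ast_{gj}(n)$ that flow $v$ is then working on. One must check that with these variables the updates \eqref{eq82}--\eqref{eq:44} are Markovian so that the stability criterion of Lemma~\ref{lem4} applies to $X(t)$.

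Next I would pass to fluid limits by the same tightness/Arzel\`{a}--Ascoli argument used for Theorem~\ref{prop1}, obtaining Lipschitz limiting trajectories whose derivatives satisfy, a.e., drift equations for $\bar{Q}_u,\bar{Q}_g,\bar{Q}_v,\bar{Z}$; because these limits are deterministic, the fluid service rates carry no time-correlation. The rates for the unicast and multicast-$g$ queues are obtained exactly as in Lemma~\ref{lem3} --- the ``not decodable before the last coded packet'' inequality \eqref{eq:proof311} (and its analogue from \eqref{eq:1}) lower-bounds the drained fluid by the ergodic capacity times $M_u/(M_u+I_{\max}K)$ for unicast and by $M_g/\overline{L}_g(l(g))$ per unit of $g$-scheduling for the multicast queue. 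The new ingredient is the fluid service rate of $\bar{Q}_v$, which I would show decomposes additively into a \emph{multicast-credit} term and a \emph{unicast-repair} term. For the first, each completed $g$-session removes $R^\ast_{gj}(n)$ from $Q_v$; since $\frac1N\sum_{n\le N}R^\ast_{gj}(n)\to\eta_{gj}M_g$ by \eqref{eq:8} and $g$-sessions complete at fluid rate $\big(\sum_{m:s^m=g}\sum_i\alpha_{mi}\pi_i\big)/\overline{L}_g(l(g))$, this yields the $\eta_{gj}M_g/\overline{L}_g(l(g))$ contribution. For the second, arguing as in the unicast case but with the random message size $M_v(n)\in[0,M_g]$: up to the $N$th repair completion the bits drained from $Q_v$ by repairs are at least $\sum_{n\le N}M_v(n)$ while the channel uses consumed are at most $\sum_{n\le N}(M_v(n)+I_{\max}K)$, so SLLN of $M_v(n)=M_g-R^\ast_{gj}(n)$ produces the rate-loss factor $(1-\eta_{gj})M_g/\big((1-\eta_{gj})M_g+I_{\max}K\big)$ multiplying the ergodic capacity $\mathbb{E}\{I(h_{gj},P^m)K|\hat{\mathbf{h}}_i\}$ over the $v$-scheduled slots.

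Finally I would run the Lyapunov argument on the fluid scale with $V=\frac12\big(\sum Q^2+Z^2\big)$: the max-weight structure of NC-RC --- it maximizes $\sum_s Q_s(t)I_s(t)-Z(t)P_s(t)$ jointly over scheduling and power --- makes its negative drift at least as large as that of the stationary randomized policy that selects action $\omega_m$ with probability $\alpha_{mi}$ in CSI state $\hat{\mathbf{h}}_i$, with the $\alpha_{mi}$ chosen to witness $\bm{\lambda}$ strictly inside $\Lambda_2$. The four families of inequalities defining $\Lambda_2$ are precisely the per-queue rate conditions making this comparison policy give every queue strictly positive surplus service, so $\dot{V}\le-\delta<0$ whenever $|X|=1$, and Lemma~\ref{lem4} then yields positive recurrence. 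I expect the main obstacle to be the fluid service-rate analysis of $Q_v$ described above: justifying the additive multicast-credit/unicast-repair decomposition for this \emph{coupled} queue, and reconciling the \emph{random} per-message rate-loss factor $M_v(n_v(t))/(M_v(n_v(t))+I_{\max}K)$ that NC-RC actually uses through $I_v(t)$ with the \emph{averaged} factor appearing in $\Lambda_2$; the boundedness $M_v(n)\in[0,M_g]$ and the i.i.d.\ structure of $R^\ast_{gj}(n)$ are what make this reconciliation work via the cumulative estimate above. The remainder is a bookkeeping adaptation of the proof of Theorem~\ref{prop1}.
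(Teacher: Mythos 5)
Your proposal follows essentially the same route as the paper: augment the state (including $R_v$, $Q_v$, and the session variables) into a countable Markov chain, pass to fluid limits, and replace Lemma~\ref{lem3} by the analogue of Lemma~\ref{lem7}, whose key content is exactly your additive decomposition of the service rate of $Q_v$ into the multicast-credit term $\eta_{gj}M_g/\overline{L}_g(l(g))$ times the $g$-scheduled fraction and the unicast-repair term with the averaged loss factor $(1-\eta_{gj})M_g/((1-\eta_{gj})M_g+I_{\max}K)$, obtained via the SLLN for $M_v(n)$ and $R^\ast_{gj}(n)$ just as in the paper's Eqs.~\eqref{eq:233}--\eqref{eq:235}. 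The concluding Lyapunov comparison with a stationary randomized policy and the appeal to Lemma~\ref{lem4} also mirror the paper, so your argument is correct and essentially identical to the published proof.
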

Note that $\lambda_g$ is upper bounded by not only the multicast throughput of the users within the set $\{1,\cdots,J(g)\}$, but also the total throughput of multicast and unicast services for each user within the set $\{l(g)+1,\cdots,J(g)\}$.
\ifreport
\begin{proof}
See Appendix \ref{sec:analysis2}.
\end{proof}
\else
The proof of Theorem \ref{thm2} is relegated to \cite{report_Sun2014} due to space limitations.
\fi

\begin{figure}
\centering
\includegraphics[width=2.2in]{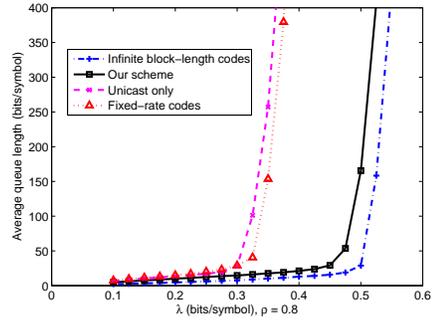}
\vspace{-0.1cm}
\caption{Simulation results of average queue length versus the traffic load $\lambda$ for $\rho = 0.8$ and $SNR=10$dB.} \label{fig2}
\vspace{-0.2cm}
\end{figure}

\section{Numerical Evaluation}\label{sec:simulation}
We evaluate the performance of NC-RC algorithm based on simulations. Consider a downlink network with $U = 5$ unicast users and $G=2$ multicast user groups. Each multicast group contains $J(g)=4$ users. The downlink wireless channels are modeled by
Rayleigh fading. We adopt an additive channel uncertainty model, e.g., \cite{Goldsmith2003}, in which the imperfect CSI $\hat{h}_u(t)$ is determined by
\begin{eqnarray}
&&\hat{h}_u(t) = \sqrt{\rho} h_u(t) + \sqrt{1-\rho}\hat{n}_u(t),
\end{eqnarray}
where $n_u(t)$ is circular-symmetric complex Gaussian variable with zero mean and the same variance of $h_u(t)$, and $\rho$ is the correlation coefficient which represents the accuracy of the imperfect CSI. {All the channel states are quantized into 16-bit discrete values due to analog-to-digital conversion.} The mutual information
is expressed as $I(h,P)=\max\{\log_2(1+|h|^2P),5\}$, where the additional upper bound $I_{\max}=5$ bits/symbol is due to the limited
dynamic range of practical RF receivers. The message size of each rateless code is $M_s=40$ bits, and the number of channel symbols in each slot is normalized as $K=1$. The average SNR of a wireless channel is determined as $SNR = E\{|h(t)|^2\}P_{av}$. The traffic loads of all the flows are chosen to be the same, i.e., $\lambda_u=\lambda_g=\lambda$.

Three reference strategies are considered for performance
comparison: The first strategy uses infinite block-length channel codes, which achieves an
outer bound of the stability region $\Lambda$ in \eqref{eq:region}. However, this strategy is not practical since it needs prior knowledge of the channel distribution $\pi_i$ and has an infinite transmission delay. The second one uses
fixed-rate codes on the physical layer and rateless codes on the application layer \cite{Calabuig2013}, where the packets not successfully decoded in the physical layer are discarded from application layer message decoding.
The third strategy utilizes unicast sessions based on physical-layer rateless codes to transmit service data to all the users \cite{Sun2013}, where all the physical-layer packets contribute to message decoding by information accumulation.
Near-optimal scheduling and power allocation schemes are designed for these strategies.
{As we have mentioned in Section \ref{sec:queue_unicast}, the throughput performance of practical rateless codes can be obtained by incorporating their reception overhead $\epsilon$ and divide the realized throughput by $(1+\epsilon)$.}

\begin{figure}
\centering
\includegraphics[width=2.53in]{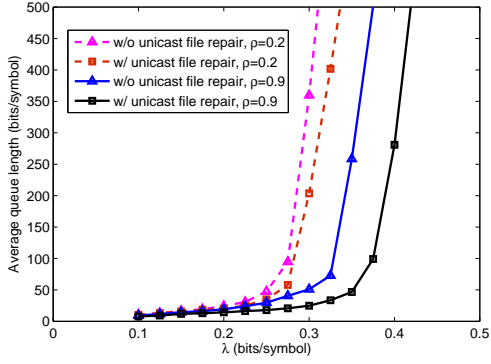}
\vspace{-0.1cm}
\caption{Simulation results of average queue length versus the traffic load $\lambda$ for unicast file repair in a heterogeneous SNR scenario.} \label{fig3}
\vspace{-0.2cm}
\end{figure}

Figures \ref{fig1} and \ref{fig2} illustrate the comparison results of average queue length versus the traffic load $\lambda$ for $\rho = 0.1$ and $\rho = 0.8$, respectively, where the users have the same average SNR of $10$ dB. As expected, the performance of all four schemes improves with increasing $\rho$. The performance of our scheme with rateless codes is quite close to that achieved by using infinite block-size codes. In fact, our scheme can achieve at least $M_u/(M_u+I_{\max}K)=8/9$ of the optimal throughput region. The performance of physical-layer fixed-rate codes is much worse than that of our scheme. {The reason is that the maximum achievable goodput $\max_RR\Pr\{I(h_u,P)\geq R|\hat{h}_u\}$ could be much smaller than the ergodic capacity $\mathbb{E}\{I(h_u,P)|\hat{h}_u\}$ for unicast service. Similar rate loss also exists in multicast service.} We note that although application-layer rateless codes can achieve multicast gain and reduce feedback overhead, they cannot recover the rate loss caused by fixed-rate codes in the physical layer. The performance of unicast only scheme is bad, because it has not exploited the multicast gain.

Figure \ref{fig3} provides the results of our schemes with and without unicast file repair. The average SNRs of the 5 unicast users are $[12,10,8,6,4]$ dB, the average SNRs of the 2 groups of multicast users are $[12,9,6,3; 12,9,6,3]$ dB, and $l(g)=3$ for both groups. When $\rho=0.2$, unicast file repair can achieve a higher throughput, because the base station uses more power to serve the last multicast user with poor channel quality. When $\rho=0.9$, the throughput benefit of unicast file repair becomes larger, because the opportunistic gain provided by power allocation increases with $\rho$.

\section{Conclusion}
We have investigated the management of network resources under imperfect CSI and infrequent ACK feedback.
To that end, a scheduling and power allocation strategy has been developed for downlink networks with both multicast and unicast services using rateless codes. Our strategy can simultaneously realize the benefits of multiuser diversity gain, multicast gain, and achieve robustness against channel uncertainty. Our simulation results suggest that our strategy can significantly improve the network throughput, compared to schemes using fixed-rate codes or relying on unicast communications to serve all the users.


\bibliographystyle{IEEEtran}
\bibliography{multicast}
\appendices
\section{Proof of Theorem \ref{prop1}} \label{sec:analysis}
\subsection{A Markov Chain of the Queueing System}
Let us define $T_s(t)\triangleq\sum_{t=t_{n,s}}^{t}1_{\{s(t)=s\}}$ as the number of slots that has been used to send the latest message of flow $s$. The evolutions of $T_s(t)$ is given by
\begin{eqnarray}
T_s(t+1)=
\left\{
\begin{array}{l l}
T_s(t)+1_{\{s(t)=s\}}&,~\textrm{if~} a(t)\neq s;\\
0&,~\textrm{otherwise}.
\end{array}\right.
\end{eqnarray}
We further define $\Upsilon_g(t)\triangleq\sum_{n=1}^{n_g(t)-1}L_g(n)$ as the accumulated block-length of the rateless codes of group $g$, which evolves as
\begin{eqnarray}
\Upsilon_g(t+1)=\Upsilon_g(t)+ (T_g(t)+1)1_{\{a(t)=g\}}.
\end{eqnarray}

Let $\mathcal {X}(t)\triangleq(R_u(t),R_{gj}(t),Q_s(t),Z(t),\frac{n_s(t)}{t+1},\frac{T_s(t)}{t+1},$ $\frac{\Upsilon_g(t)}{t+1})$ denote the system state.
\ifreport
Note that we use $\frac{n_s(t)}{t+1}$ instead of $\frac{n_s(t)}{t}$, so that it is well-defined when $t=0$.
Since there are finite number of possible CSI values and control actions, the state spaces of $R_u(t)$, $R_{gj}(t)$, $Q_s(t)$, $Z(t)$, $n_s(t)$, $T_s(t)$, $\Upsilon_g(t)$ are all countable. Moreover, the ratios $\frac{n_s(t)}{t+1}$, $\frac{T_s(t)}{t+1}$, and $\frac{\Upsilon_g(t)}{t+1}$ also have countable state space.
\else
Since the state spaces of the CSI and transmission power are finite, one can show that the state space of $\mathcal {X}(t)$ is countable \cite{report_Sun2014}.
\fi
In the NC-RC algorithm, the network control action $\{s(t),P(t)\}$ at slot $t$ is determined from $\mathcal {X}(t)$ and $\hat{\textbf{h}}(t)$. The service update variable $a(t)$ is determined from $\mathcal {X}(t)$, $\{s(t),P(t)\}$, and the channel states $\{h_u(t),h_{gj}(t)\}$. In addition, one can observe that the system state $\mathcal {X}(t+1)$ can be derived from $\mathcal {X}(t)$, $\{s(t),P(t)\}$, and $a(t)$. Therefore, the process $\mathcal {X}=(\mathcal {X}(t) ,t\geq0)$ is a discrete-time countable Markov chain.

\subsection{Fluid Limits}
We now establish the fluid limit model of $\mathcal {X}$. 
{Let us define the norm of $\mathcal {X}(t)$ as $||\mathcal {X}(t)||\triangleq\sum_{s=1}^{U+G} |Q_s(t)|+|Z(t)|+|R_u(t)|+|R_{gj}(t)|+|\frac{n_s(t)}{t+1}|+|\frac{T_s(t)}{t+1}|+|\frac{\Upsilon_g(t)}{t+1}|$.}%
~Let $\mathcal {X}^{(x)}$ denote a process $\mathcal {X}$ with an initial state satisfying
\begin{eqnarray}\label{eq:init_con}
||\mathcal {X}^{(x)}(0)||=x.
\end{eqnarray}

Let $A_s(t)\triangleq\sum_{\tau=0}^t\Phi_s(\tau)$ and $D_s(t)$ denote the accumulated arrival and departure bits at queue $Q_s$ up to slot $t$, respectively.~We adopt the convention that $A_s(0)=0$ and $D_s(0)=0$.
Let $\Psi_s(t)\triangleq M_{s}1_{\{a(t) =s \}}$ denote the service rate of $Q_s$ at slot $t$.~Since the queue can be empty when it is scheduled, we have $D_s(t)-D_s(t-1)\leq \Psi_s(t)$.
The queue length $Q_s$ can be described in an alternative form
\begin{eqnarray}\label{eq:queue_evo1}
Q_s(t)=Q_s(0)+A_s(t)-D_s(t).
\end{eqnarray}

Let $W(t)$ and $\Theta(t)$ denote the accumulated arrival and departure power of the virtual queue $Z$, respectively.~Therefore, we attain $W(t)\triangleq\sum_{\tau=0}^tP(\tau)$ and $\Theta(t)-\Theta(t-1)\leq P_{av}$.~Then, the virtual queue $Z$ can be also written as
\begin{eqnarray}\label{eq:eq1}
Z(t)=Z(0)+W(t)-\Theta(t).
\end{eqnarray}

Let $B_{mi}(t)$ denote the time fraction up to slot $t$ when the network control action is $\omega_m$ and the imperfect channel state is $\hat{\mathbf{h}}_i$, defined by
\begin{eqnarray}\label{eq:eq20}
B_{mi}(t)\triangleq\sum_{\tau=0}^t1_{\{\omega(\tau)=\omega_m,\hat{\textbf{h}}(\tau)=\hat{\textbf{h}}_i\}}.
\end{eqnarray}

Let us define another process $\mathcal {Y}=(\mathcal {X},A_s,D_s,$ $W,\Theta,B_{mi},\Psi_s)$, where the tuple denotes a list of processes.~Therefore, the sample path of $\mathcal {Y}^{(x)}$ uniquely define the sample path of $\mathcal {X}^{(x)}$.~ We extend the definition of $\mathcal {Y}$ to each continuous time $t\geq0$ as $\mathcal {Y}^{(x)}(t)=\mathcal {Y}^{(x)}(\lfloor t \rfloor)$.

Recall that a sequence of functions $f_n(\cdot)$ is said to converge to a function $f(\cdot)$ uniformly over compact (u.o.c) if for all $t\geq0$, $\lim_{n\rightarrow\infty}\sup_{0\leq t'\leq t}|f_n(t')-f(t')|=0$. We now consider a sequence of processes $\{\frac{1}{x_n}\mathcal {Y}^{(x_n)}(x_n\cdot)\}$ that is scaled both in time and space, and show the convergence properties of the sequences in the following lemma:
\begin{lemma} \label{lem1}
With probability one, for any sequence of the processes $\{\frac{1}{x_n}\mathcal {Y}^{(x_n)}(x_n\cdot)\}$, where $x_n$ is a sequence of positive integers with $x_n\rightarrow\infty$, there exists a subsequence $x_{n_k}$ with $x_{n_k}\rightarrow\infty$ as $k\rightarrow\infty$ such that the following u.o.c convergences hold:
{\small\begin{eqnarray}
&&\frac{1}{x_{n_k}}Q_s^{(x_{n_k})}(x_{n_k}t)\rightarrow q_s(t),\label{eq:uoc1}\\
&&\frac{1}{x_{n_k}}A_s^{(x_{n_k})}(x_{n_k}t)\rightarrow a_s(t),\label{eq:uoc5}\\
&&\frac{1}{x_{n_k}}D_s^{(x_{n_k})}(x_{n_k}t)\rightarrow d_s(t),\label{eq:uoc6}\\
&&\frac{1}{x_{n_k}}Z^{(x_{n_k})}(x_{n_k}t)\rightarrow z(t),\label{eq:uoc4}\\
&&\frac{1}{x_{n_k}}W^{(x_{n_k})}(x_{n_k}t)\rightarrow w(t),\label{eq:uoc8}\\
&&\frac{1}{x_{n_k}}\Theta^{(x_{n_k})}(x_{n_k}t)\rightarrow \theta(t),\label{eq:uoc9}\\
&&\frac{1}{x_{n_k}}B_{mi}^{(x_{n_k})}(x_{n_k}t)\rightarrow b_{mi}(t),\label{eq:uoc9}\\
&&\frac{1}{x_{n_k}}\int_0^t\Phi_s^{(x_{n_k})}(x_{n_k}\tau)d\tau\rightarrow\int_0^t\phi_s(\tau)d\tau,\label{eq:uoc10}\\
&&\frac{1}{x_{n_k}}\int_0^t \Psi_s^{(x_{n_k})}(x_{n_k}\tau) d\tau\rightarrow\int_0^t\psi_s(\tau)d\tau,\label{eq:uoc11}\\
&&\frac{1}{x_{n_k}}\int_0^t P^{(x_{n_k})}(x_{n_k}\tau) d\tau\rightarrow\int_0^tp(\tau)d\tau,\label{eq:uoc13}
\end{eqnarray}}
$\!\!$where the functions $q_s,a_s,d_s,z,w,\theta,b_{mi}$ are Lipschitz
continuous in $[0,\infty)$.
\end{lemma}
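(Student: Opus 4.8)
The plan is to establish the fluid limits by a standard compactness argument based on the Arzelà–Ascoli theorem, combined with the Strong Law of Large Numbers for the arrival and channel processes. First I would observe that many of the coordinates of $\mathcal{Y}^{(x)}$ are cumulative counting processes whose increments are bounded in each slot: the power $P(\tau)\le\max_{m}P^m$, the service rate $\Psi_s(\tau)\le M_s$, the arrival increments $\Phi_s(\tau)$ have finite mean by \eqref{eq:SLLN-arrival}, and the indicator-type processes $A_s,D_s,W,\Theta,B_{mi}$ increase by at most a fixed constant per slot. Consequently, after the diffusive scaling $\frac{1}{x_n}\mathcal{Y}^{(x_n)}(x_n\,\cdot)$, each of these components is Lipschitz with a common constant (for the arrival term one uses the SLLN to get an a.s.\ asymptotic Lipschitz bound), hence the scaled family is uniformly bounded and equicontinuous on every compact time interval. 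By Arzelà–Ascoli, every sequence has a subsequence $x_{n_k}$ along which all these components converge u.o.c.\ to Lipschitz-continuous limits; a diagonal argument over a countable dense set of times and over the finitely many indices $s,m,i$ lets me extract a single subsequence that works simultaneously for \eqref{eq:uoc1}–\eqref{eq:uoc13}.

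Next I would handle the integral-form limits \eqref{eq:uoc10}, \eqref{eq:uoc11}, \eqref{eq:uoc13}. Writing $\frac{1}{x_{n_k}}\int_0^t \Psi_s^{(x_{n_k})}(x_{n_k}\tau)\,d\tau = \frac{1}{x_{n_k}}\sum_{\tau=0}^{\lfloor x_{n_k} t\rfloor}\Psi_s(\tau) + o(1)$, this cumulative sum is exactly the scaled version of $\sum_\tau \Psi_s(\tau)$, which is itself a counting-type process with bounded increments; its u.o.c.\ limit exists along the same subsequence by the same equicontinuity argument, and the limit can be represented as $\int_0^t \psi_s(\tau)\,d\tau$ because any uniform limit of such cumulative processes is absolutely continuous (indeed Lipschitz) and hence the integral of its a.e.\ derivative $\psi_s$. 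The same reasoning applies verbatim to $\Phi_s$ and to $P(\tau)$, giving $\phi_s$ and $p$. For the queue relations I would simply pass to the limit in the identities \eqref{eq:queue_evo1} and \eqref{eq:eq1}, using $\frac{1}{x_{n_k}}Q_s^{(x_{n_k})}(0)\to q_s(0)$ and the already-established convergences of $A_s,D_s,W,\Theta$, so that $q_s(t)=q_s(0)+a_s(t)-d_s(t)$ and $z(t)=z(0)+w(t)-\theta(t)$ in the limit; this also confirms the Lipschitz continuity of $q_s$ and $z$.

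The main obstacle is not the compactness itself but making sure the extraction is uniform over \emph{all} the coordinates at once and that the ``a.s.''\ statement is handled correctly: the SLLN hypotheses \eqref{eq:SLLN-arrival} and the i.i.d.\ channel assumptions hold on a probability-one event, and one must fix a single such event on which the Lipschitz bounds for the arrival/service cumulative processes hold before invoking Arzelà–Ascoli pathwise. A secondary subtlety is the presence of the normalized coordinates $\frac{n_s(t)}{t+1},\frac{T_s(t)}{t+1},\frac{\Upsilon_g(t)}{t+1}$ inside $\mathcal{X}(t)$: after the time scaling by $x_{n_k}$ these become $\frac{n_s(x_{n_k}t)}{x_{n_k}t+1}$ etc., which are already order-one quantities bounded by $1$ (for $n_s/(t+1)$) or by a constant times $1$ (using $T_s(t)\le t$ and a bound on average code length for $\Upsilon_g/(t+1)$ via \eqref{eq:blocklength-SLLN1}), so they do not blow up under scaling; I would note they converge along a further subsequence to bounded limits without claiming Lipschitz continuity for them, which is why the lemma statement only asserts Lipschitz continuity for $q_s,a_s,d_s,z,w,\theta,b_{mi}$. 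Once these points are dispatched, the lemma follows from the classical fluid-limit machinery of \cite{Dai95,Andrews04}.
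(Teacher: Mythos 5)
Your proposal is correct and follows essentially the same route as the paper's proof: the SLLN \eqref{eq:SLLN-arrival} handles the arrival terms \eqref{eq:uoc5} and \eqref{eq:uoc10}, the per-slot bounded increments of $D_s,W,\Theta,B_{mi}$ and of the cumulative $\Psi_s$ and $P$ give uniform boundedness and equicontinuity so that Arzel\`a--Ascoli yields u.o.c.\ convergent subsequences with Lipschitz limits, and the queue identities \eqref{eq:queue_evo1} and \eqref{eq:eq1} together with the bounded scaled initial condition deliver \eqref{eq:uoc1} and \eqref{eq:uoc4}. Your additional care about diagonalizing over the finitely many indices, fixing a single probability-one event, and justifying the representation of the limits as $\int_0^t\psi_s(\tau)d\tau$ via absolute continuity only makes explicit steps the paper leaves implicit.
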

\ifreport
\begin{proof}
See Appendix \ref{App:lem1}.
\end{proof}
\else
The proof of Lemma \ref{lem1} is provided in \cite{report_Sun2014}.
\fi

Since the limiting functions $q_s,a_s,d_s,z,w,\theta,b_{mi}$ are Lipschitz
continuous in $[0,\infty)$, they are absolutely continuous.~Therefore, these limiting functions are differentiable at almost all time $t\in[0,\infty)$, which we call \emph{regular time}.

\begin{lemma}\label{lem2}
Any fluid limit $(q_s,a_s,d_s,z,w,\theta,b_{mi})$ satisfies the following equations:
{\small\begin{eqnarray}
&&q_s(t)=q_s(0)+a_s(t)-d_s(t),\label{eq:fluid_eq2}\\
&&a_s(t)=\lambda_st,\label{eq:fluid_eq1}\\
&&d_s(t)\leq\int_0^t\psi_s(\tau)d\tau,\label{eq:fluid_eq4}\\
&&\frac{d}{dt}q_s(t)=\left\{\begin{array}{l l}
\lambda_s-\psi_s(t), &\textrm{if}~ q_s(t)>0;\\
(\lambda_s-\psi_s(t))^+,&\textrm{otherwise},\end{array}\right.\label{eq:fluid_eq5}\\
&&z(t)=z(0)+w(t)-\theta(t),\label{eq:fluid_eq9}\\
&&w(t)=\int_0^tp(\tau)d\tau,\label{eq:fluid_eq10}\\
&&\theta(t)\leq P_{av}t,\label{eq:fluid_eq11}\\
&&\frac{d}{dt}z(t)=\left\{\begin{array}{l l}
p(\tau)-P_{av}, &\textrm{if}~z(t)>0;\\
(p(\tau)-P_{av})^+,&\textrm{otherwise},\end{array}\right.\label{eq:fluid_eq12}\\
&& \sum_{m=1}^F  b_{mi}(t)=\pi_it.\label{eq:fluid_eq13}
\end{eqnarray}}
\end{lemma}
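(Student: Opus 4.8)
The plan is to derive each of the fluid equations \eqref{eq:fluid_eq2}--\eqref{eq:fluid_eq13} by passing to the limit along the convergent subsequence $x_{n_k}$ guaranteed by Lemma \ref{lem1}, using the discrete-time queueing recursions established in Section \ref{sec:queue_system} and Appendix \ref{sec:analysis}. The equations \eqref{eq:fluid_eq2} and \eqref{eq:fluid_eq9} are immediate: they are just the scaled limits of the identities \eqref{eq:queue_evo1} and \eqref{eq:eq1}, which hold sample-path-wise, so the u.o.c.\ convergences \eqref{eq:uoc1}, \eqref{eq:uoc5}, \eqref{eq:uoc6} (resp.\ \eqref{eq:uoc4}, \eqref{eq:uoc8}, \eqref{eq:uoc9}) give the result. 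Equation \eqref{eq:fluid_eq1} follows from the SLLN assumption \eqref{eq:SLLN-arrival} on the arrival process: $\frac{1}{x_{n_k}}A_s^{(x_{n_k})}(x_{n_k}t)=\frac{1}{x_{n_k}}\sum_{\tau=0}^{\lfloor x_{n_k}t\rfloor}\Phi_s(\tau)\to\lambda_s t$ almost surely. Equation \eqref{eq:fluid_eq10} is the definition $W(t)=\sum_{\tau=0}^t P(\tau)$ combined with the convergence \eqref{eq:uoc13}, and \eqref{eq:fluid_eq13} follows by noting that $\sum_{m=1}^F B_{mi}(t)=\sum_{\tau=0}^t 1_{\{\hat{\mathbf{h}}(\tau)=\hat{\mathbf{h}}_i\}}$, whose scaled limit is $\pi_i t$ by the SLLN for the i.i.d.\ imperfect-CSI process (its state space is finite with stationary distribution $\bm\pi$).

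The inequalities \eqref{eq:fluid_eq4} and \eqref{eq:fluid_eq11} come from the sample-path bounds $D_s(t)-D_s(t-1)\le\Psi_s(t)$ and $\Theta(t)-\Theta(t-1)\le P_{av}$ recorded in Appendix \ref{sec:analysis}: summing and scaling gives $\frac{1}{x_{n_k}}D_s^{(x_{n_k})}(x_{n_k}t)\le\frac{1}{x_{n_k}}\sum_{\tau}\Psi_s^{(x_{n_k})}(x_{n_k}\tau)$, and passing to the limit using \eqref{eq:uoc6} and \eqref{eq:uoc11} yields $d_s(t)\le\int_0^t\psi_s(\tau)d\tau$; similarly $\theta(t)\le P_{av}t$. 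For \eqref{eq:fluid_eq5}, I would argue at a regular time $t$. If $q_s(t)>0$, then by continuity $Q_s^{(x_{n_k})}(x_{n_k}\tau)>0$ for all $\tau$ in a neighborhood of $t$ and all large $k$; hence over that window no departure is lost to an empty queue, so $D_s$ increases by exactly $\Psi_s$, i.e.\ $d_s(\cdot)=\int_0^{\cdot}\psi_s(\tau)d\tau$ locally, giving $\frac{d}{dt}q_s(t)=\lambda_s-\psi_s(t)$ via \eqref{eq:fluid_eq2} and \eqref{eq:fluid_eq1}. If $q_s(t)=0$ at a regular time, then $\frac{d}{dt}q_s(t)=0$ (a differentiable function at an interior minimum of its zero set has zero derivative, and $q_s\ge0$), while in general $\frac{d}{dt}q_s(t)\ge\lambda_s-\psi_s(t)$ since $d_s'(t)\le\psi_s(t)$; combining gives $\frac{d}{dt}q_s(t)=(\lambda_s-\psi_s(t))^+$. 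The identical argument applied to the virtual queue $Z$ (using \eqref{eq:fluid_eq9}, \eqref{eq:fluid_eq10}, \eqref{eq:fluid_eq11} and $\Theta(t)-\Theta(t-1)\le P_{av}$ with equality whenever $Z>0$) establishes \eqref{eq:fluid_eq12}.

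The main obstacle is the case-$q_s(t)=0$ branch of \eqref{eq:fluid_eq5} (and its analogue \eqref{eq:fluid_eq12}): one must carefully justify that at a regular time with $q_s(t)=0$ the fluid derivative is exactly $(\lambda_s-\psi_s(t))^+$ rather than merely bounded between $0$ and $\lambda_s-\psi_s(t)$. This requires combining (i) $q_s(t)\ge0$ with differentiability to force $\frac{d}{dt}q_s(t)\ge0$ when $q_s(t)=0$, and (ii) the universal lower bound $\frac{d}{dt}d_s(t)\le\psi_s(t)$ to force $\frac{d}{dt}q_s(t)\ge\lambda_s-\psi_s(t)$, then observing these two bounds pin the derivative to $\max\{0,\lambda_s-\psi_s(t)\}$; the subtlety is that $\psi_s$ itself is only defined a.e.\ as a density, so all statements must be interpreted at regular times where the relevant integrands have Lebesgue points. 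The remaining equations are routine limit-interchange arguments once Lemma \ref{lem1} is in hand.
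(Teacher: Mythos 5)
Your proposal is correct and follows essentially the same route as the paper's own (very terse) argument: pass to the limit along the subsequence of Lemma \ref{lem1}, use the SLLN for \eqref{eq:fluid_eq1} and \eqref{eq:fluid_eq13}, the sample-path identities and bounds for \eqref{eq:fluid_eq2}, \eqref{eq:fluid_eq4}, \eqref{eq:fluid_eq9}--\eqref{eq:fluid_eq11}, and differentiability at regular times to rewrite \eqref{eq85}, \eqref{eq185}, \eqref{eq86} as \eqref{eq:fluid_eq5} and \eqref{eq:fluid_eq12}. Your treatment of the boundary case $q_s(t)=0$ (derivative pinned to $(\lambda_s-\psi_s(t))^+$ via nonnegativity plus $d_s'(t)\leq\psi_s(t)$) is simply a more explicit version of what the paper asserts in one sentence.
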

\ifreport
\begin{proof}
See Appendix \ref{App:lem2}.
\end{proof}
\else
The proof of Lemma \ref{lem2} is provided in \cite{report_Sun2014}.
\fi

Let us define the functions
\begin{eqnarray}\label{eq:fluid_eq3}
&&c_{mi}(t)\triangleq\frac{1}{\pi_i}\frac{d}{dt}b_{mi}(t),
\end{eqnarray}
for all regular time $t\geq0$. Then, by \eqref{eq:fluid_eq13}, we attain
\begin{eqnarray}
\sum_{m=1}^F c_{mi}(t)=1,~\forall~i.
\end{eqnarray}

\begin{lemma}\label{lem3}
The fluid limit functions satisfy the properties:
{\small\begin{eqnarray}\label{eq:lem3-1}
&&\!\!\!\!\!\!\!\!\!\!\!\!\psi_u(t)\geq \sum_{m:s^m=u} \sum_{i=1}^E \mathbb{E}\{I(h_u,P^m)K|\hat{\mathbf{h}}_i\}\frac{M_uc_{mi}(t)\pi_i}{M_u+I_{\max}K},~\label{eq:lem3-1}\\
&&\!\!\!\!\!\!\!\!\!\!\!\!\psi_g(t)= \sum_{m:s^m=g} \sum_{i=1}^E \frac{M_g}{\overline{L}_g}c_{mi}(t)\pi_i,\label{eq:lem4-1}\\
&&\!\!\!\!\!\!\!\!\!\!\!\!p(t)= \sum_{m=1}^F \sum_{i=1}^E P^mc_{mi}(t)\pi_i,\label{eq:lem5-1}
\end{eqnarray}}
$\!\!$for all $u\in\{1,\cdots,U\}$ and $g\in\{U+1,\cdots,U+G\}$.
\end{lemma}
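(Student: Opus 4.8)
The three statements are, respectively, the fluid-level service rate of a unicast queue, of a multicast queue, and the fluid-level power expenditure, and I would prove all three with the same template: write the prelimit quantity over a generic time window $[xt,x(t+\delta)]$, pass to the fluid limit along the subsequence of Lemma~\ref{lem1}, divide by $\delta$ and let $\delta\downarrow 0$ to read off the derivative at a regular time, using $\tfrac{d}{dt}b_{mi}(t)=\pi_i c_{mi}(t)$ from \eqref{eq:fluid_eq3}. The power identity comes first and is pure bookkeeping: since the action $\omega_m=(P^m,s^m)$ fixes the transmit power, $P(\tau)=\sum_m P^m 1_{\{\omega(\tau)=\omega_m\}}$, hence $W^{(x)}(xt)=\sum_m P^m\sum_i B_{mi}^{(x)}(xt)$; dividing by $x$ and using the u.o.c.\ convergence $\tfrac1x B_{mi}^{(x)}(x\cdot)\to b_{mi}$ gives $w(t)=\sum_m P^m\sum_i b_{mi}(t)$, and differentiating yields $p(t)=\sum_m\sum_i P^m c_{mi}(t)\pi_i$.

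For the multicast queue, fix a regular time $t$ and $\delta>0$. Over the slots with index in $[xt,x(t+\delta)]$ let $\Delta C_g$ be the number of group-$g$ codewords that complete and $\Delta S_g=\sum_{m:s^m=g}\sum_i\bigl(B_{mi}^{(x)}(x(t+\delta))-B_{mi}^{(x)}(xt)\bigr)$ be the number of slots spent serving group $g$. These serving slots decompose into $\Delta C_g$ complete rateless codewords plus at most one partial codeword at each endpoint of the window; since the $L_g(n)$ are i.i.d.\ and satisfy the SLLN \eqref{eq:blocklength-SLLN1}, the sum of their lengths over the completed codewords is $\overline L_g\,\Delta C_g$ up to an $o(x)$ error, the endpoint codewords being negligible after scaling. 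Because $\Psi_g(\tau)=M_g 1_{\{a(\tau)=g\}}$ and $n_g$ increments exactly when $a(\tau)=g$, we have $\tfrac1x\int_{xt}^{x(t+\delta)}\Psi_g^{(x)}(\tau)\,d\tau=\tfrac{M_g}{x}\Delta C_g=\tfrac{M_g}{\overline L_g}\cdot\tfrac{\Delta S_g}{x}+o(1)$. Letting $x\to\infty$, then $\delta\downarrow0$, gives $\psi_g(t)=\tfrac{M_g}{\overline L_g}\sum_{m:s^m=g}\sum_i c_{mi}(t)\pi_i$; the identity rather than an inequality is because the SLLN pins $\Delta S_g/\Delta C_g$ to $\overline L_g$ from both sides (when group $g$ is served at zero fluid rate, the statement is the trivial $0=0$).

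The unicast claim is the only one where the i.i.d.\ structure of the codeword lengths is lost — the scheduler may bias user $u$ toward favorable CSI — so I would instead bound the accumulated mutual information. For the serving slots of user $u$ with index in $[xt,x(t+\delta)]$, the reception rule together with \eqref{eq:proof311} shows each completed codeword of user $u$ absorbs strictly less than $M_u+I_{\max}K$ bits (at most $M_u$ before the last packet, plus at most $I_{\max}K$ from it), so if $\Delta C_u=n_u^{(x)}(x(t+\delta))-n_u^{(x)}(xt)$ then $\Delta C_u\ge \tfrac{1}{M_u+I_{\max}K}\sum_{\tau:\,s(\tau)=u} I(h_u(\tau),P(\tau))K - O(1)$, the $O(1)$ absorbing the partial codewords at the two endpoints. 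In the slots where the action is $\omega_m$ with $s^m=u$ and $\hat{\mathbf h}(\tau)=\hat{\mathbf h}_i$ — of which there are $B_{mi}^{(x)}(x(t+\delta))-B_{mi}^{(x)}(xt)$ — the true channel $h_u(\tau)$ is conditionally i.i.d.\ with law $f(h_u|\hat h_u(\tau))$ and, crucially, is \emph{not} observed by the scheduler, so a conditional SLLN gives $\sum_\tau I(h_u(\tau),P(\tau))K=\sum_{m:s^m=u}\sum_i \mathbb{E}\{I(h_u,P^m)K|\hat{\mathbf h}_i\}\bigl(B_{mi}^{(x)}(x(t+\delta))-B_{mi}^{(x)}(xt)\bigr)+o(x)$. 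Dividing by $x$, passing to the fluid limit, then dividing by $\delta$ and letting $\delta\downarrow0$ yields \eqref{eq:lem3-1}, using $\psi_u(t)=M_u\tfrac{d}{dt}\bigl(\lim_x \tfrac1x n_u^{(x)}(x\cdot)\bigr)(t)$ and $\Psi_u(\tau)=M_u 1_{\{a(\tau)=u\}}$.

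The main obstacle is making this last argument rigorous: establishing the conditional SLLN over a random subset of slots whose selection depends on the entire past history (but not on the current, unobserved channel state), controlling the partial-codeword corrections uniformly in $x$, and justifying the interchange of the $x\to\infty$ and $\delta\downarrow0$ limits so that a prelimit inequality descends to an inequality between the derivatives of the (absolutely continuous, by Lemma~\ref{lem2}) fluid limit functions at a.e.\ $t$. The overshoot bound \eqref{eq:proof311} is exactly what keeps the unicast loss to the multiplicative factor $M_u/(M_u+I_{\max}K)$ rather than something worse.
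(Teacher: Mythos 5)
Your proposal is correct and follows essentially the same route as the paper: the power identity is the same bookkeeping via $b_{mi}$, the unicast bound uses the same overshoot inequality \eqref{eq:proof311} giving the $M_u/(M_u+I_{\max}K)$ factor together with the conditional SLLN over slots with a given action and CSI state, and the multicast equality rests on the SLLN for the i.i.d.\ code lengths $L_g(n)$. The only cosmetic difference is that you run the multicast argument over windows $[xt,x(t+\delta)]$ with a codeword-count decomposition, whereas the paper squeezes the cumulative ratio $\sum_\tau 1_{\{s(\tau)=g\}}/\sum_\tau 1_{\{a(\tau)=g\}}\to\overline{L}_g$ and then differentiates $\int_0^t\psi_g$.
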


\ifreport
\begin{proof}
See Appendix \ref{App:lem3}.
\end{proof}
\else
The proof of Lemma \ref{lem3} is provided in \cite{report_Sun2014}.
\fi

\subsection{Stability Analysis}\label{sec:app_sta}
The following lemma provides a stability criteria for distrete-time countable Markov chains, first obtained by Malyshev and Menshikov \cite{Malyshev79}. This stability criteria was also derived in \cite{Stolyar92} for continuous-time countable homogeneous Markov chains.
\begin{lemma}\cite[Theorem 4]{Andrews04} \label{lem4}
Suppose that there exist an $\epsilon> 0$ and a finite integer $T > 0$ such that for any
sequence of processes $\{\frac{1}{x}S^{(x)}(xT), x=1,2, \cdots\}$, we have
\begin{eqnarray}\label{eq:condition}
\limsup\limits_{x\rightarrow\infty}\mathbb{E}\left[\frac{1}{x}||S^{(x)}(xT)||\right]\leq 1-\epsilon.
\end{eqnarray}
Then, the Markov chain $S$ is stable.
\end{lemma}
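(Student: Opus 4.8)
The plan is to read the hypothesis \eqref{eq:condition} as a \emph{multiplicative}, state-dependent Foster--Lyapunov drift condition with the norm playing the role of the Lyapunov function, and then to reduce it to the ordinary Foster criterion for a suitably embedded chain. I take $V(s)=\|s\|$, and I use two structural facts about our chain: $V$ has \emph{finite sublevel sets} — each coordinate of $\mathcal{X}$ has a countable, hence discrete, range, so $\{s:\|s\|\le M\}$ is finite for every $M$ — and the per-slot changes of $\mathcal{X}$ are bounded in mean (the $R$'s change by at most $I_{\max}K$, the counters $n_s/(t+1),\,T_s/(t+1),\,\Upsilon_g/(t+1)$ by $O(1/t)$, the arrivals by mean $\lambda_s$), so that for a constant $C$ one has $\mathbb{E}\big[\|S(t)\|\,\big|\,S(0)=s\big]\le V(s)+Ct$. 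Applying \eqref{eq:condition} to the sequence of processes started from a fixed state $s$ (scaling parameter $x\asymp\|s\|$), a routine subsequence argument gives a finite $x_0$ such that, writing $h(s)\triangleq\lceil\|s\|T\rceil$,
\[
\mathbb{E}\big[V(S(h(s)))\,\big|\,S(0)=s\big]\;\le\;\Big(1-\tfrac{\epsilon}{2}\Big)\|s\|\;=\;V(s)-\tfrac{\epsilon}{2T}\,h(s)\qquad\text{whenever }\|s\|\ge x_0 .
\]
So $V$ enjoys an additive negative drift of a fixed rate, collected over a horizon that scales linearly with the current norm.

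Next I would pass to the chain embedded at the sampling times $\tau_0=0$, $\tau_{k+1}=\tau_k+h\big(\max\{\|S(\tau_k)\|,\,x_0\}\big)$. By the Markov property, on $\{\|S(\tau_k)\|\ge x_0\}$ the displayed inequality gives $\mathbb{E}[V(S(\tau_{k+1}))\,|\,S(\tau_k)]\le V(S(\tau_k))-\tfrac{\epsilon x_0}{2}$, a uniform negative drift, while on the finite set $\mathcal{K}\triangleq\{s:\|s\|<x_0\}$ the mean-increment bound keeps $\mathbb{E}[V(S(\tau_{k+1}))\,|\,S(\tau_k)]$ below a constant. Hence $(S(\tau_k))_{k\ge0}$ satisfies Foster's criterion relative to the finite set $\mathcal{K}$ and is positive recurrent, with finite expected number of sampling epochs between successive visits to $\mathcal{K}$.

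Finally I would transfer positive recurrence back to the original chain $S$. The sampling gaps obey $\tau_{k+1}-\tau_k\le \lceil(\|S(\tau_k)\|+x_0)T\rceil$; since $\mathbb{E}[V(S(\tau_k))]$ decays geometrically along an excursion of the embedded chain away from $\mathcal{K}$ whereas the gaps grow only linearly in $V$, the total number of \emph{original} slots between successive visits of $S$ to $\mathcal{K}$ has finite expectation from every state, and an irreducible countable-state chain with finite expected return time to a finite set is positive recurrent, i.e.\ stable. I expect this transfer to be the main obstacle, precisely because the time horizon in \eqref{eq:condition} is \emph{state dependent} (of order $\|s\|$ rather than a fixed number of slots): one has to control the growth of $\|S(t)\|$ uniformly over each sampling epoch, both to license the drift inequality near $\mathcal{K}$ and to bound the inter-sample times. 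This is exactly the mechanism behind the Malyshev--Menshikov criterion \cite{Malyshev79} and the fluid-limit method for stability, and the linear time--space scaling built into \eqref{eq:condition} is what makes it go through. (The statement, Lemma~\ref{lem4}, is Theorem~4 of \cite{Andrews04}; the outline above is meant to reconstruct that argument.)
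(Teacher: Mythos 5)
The paper never proves this lemma: it is imported verbatim as Theorem 4 of \cite{Andrews04} (the criterion of Malyshev and Menshikov \cite{Malyshev79}, see also \cite{Stolyar92}) and used as a black box in Appendix A. So there is no in-paper argument to compare against; what you have written is a reconstruction of the cited result, and your route --- read \eqref{eq:condition} as a multiplicative drift for $V=\|\cdot\|$, sample the chain at the state-dependent horizon $h(s)\approx \|s\|T$, apply Foster's criterion for the embedded chain relative to a finite set, then control the original-time clock by playing the linear growth of the sampling gaps against the geometric contraction of $V$ along an excursion --- is exactly the mechanism behind the Malyshev--Menshikov criterion, so the skeleton is the right one.

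Two points in your write-up are genuinely shaky. First, your justification of the finite-sublevel-set property (``each coordinate has a countable, hence discrete, range, so $\{s:\|s\|\le M\}$ is finite'') is a non sequitur: countability does not imply discreteness, and discreteness of each coordinate does not imply finiteness of sublevel sets. Worse, for the very chain $\mathcal{X}$ to which the paper applies the lemma the claim is false: the coordinates $n_s(t)/(t+1)$, $T_s(t)/(t+1)$, $\Upsilon_g(t)/(t+1)$ are all bounded by $1$ yet take infinitely many values, so every sublevel set of the paper's norm is infinite. Finiteness (or petiteness/compactness) of the exception set is a genuine hypothesis of the Malyshev--Menshikov criterion and has to come from the model framework of \cite{Andrews04}, not from countability; with an infinite set $\mathcal{K}$, Foster's criterion for your embedded chain does not by itself yield positive recurrence, so this is a real gap in the argument as written, not a cosmetic one. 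Second, and relatedly, the lemma as stated is abstract, whereas your proof leans on structural side facts (bounded mean per-slot increments, the specific composition of the norm) that are not among its hypotheses; a self-contained proof must either state such assumptions explicitly, as the cited theorem does through its model setup, or avoid them. The remaining steps --- extracting a uniform drift bound for all large-norm initial states from the ``for any sequence'' quantifier in \eqref{eq:condition}, and the supermartingale bookkeeping showing that the expected original-time return to the exception set is finite despite the state-dependent horizon --- are correct in spirit and match the standard proof.
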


\ifreport
Note that from system causality, we have $n_s(t)\leq t$ for all flow $s$. Then, we have
\begin{eqnarray}
\lim_{k\rightarrow \infty} \frac{1}{x_{n_k}} \left|\frac{1}{x_{n_k}t+1}n_s^{(x_{n_k})}(x_{n_k}t)\right| = 0
\end{eqnarray}
almost surely for all $t\geq0$. Similarly, since $T_s(t)\leq t$ and $\Upsilon_g(t)\leq t$, the fluid limits of $\frac{T_s(t)}{t+1}, \frac{\Upsilon_g(t)}{t+1}$ are also zero functions almost surely. In addition, the accumulated mutual information $R_u(t),R_{gj}(t)$ satisfies $R_u(t)\leq I_{\max}Kt$ and $R_{gj}(t)\leq I_{\max}Kt$, hence their fluild limits are zero functions almost surely. Therefore, it remains to show that the fluid limit model of the subsystem $(Q_s(t),Z(t))$ is stable in the sense of \eqref{eq:condition}.
\else
Note that from system causality, we have $n_s(t)\leq t$ for all flow $s$. Then, we have
\begin{eqnarray}
\lim_{k\rightarrow \infty} \frac{1}{x_{n_k}} \left|\frac{1}{x_{n_k}t+1}n_s^{(x_{n_k})}(x_{n_k}t)\right| = 0
\end{eqnarray}
almost surely for all $t\geq0$. Similarly, we can derive that the fluid limits for $R_u(t),R_{gj}(t),\frac{T_s(t)}{t+1}, \frac{\Upsilon_g(t)}{t+1}$ are zero functions almost surely \cite{report_Sun2014}. Therefore, it remains to show that the fluid limit model of the subsystem $(Q_s(t),Z(t))$ is stable in the sense of \eqref{eq:condition}.
\fi


%
%
%
%

Let us consider a quadratic Lyapunov function for the fluid limit system:
\begin{eqnarray}
L(t)=\frac{1}{2}\sum_{s=1}^{U+G}{q}_s(t)^2+z(t)^2.
\end{eqnarray}
Then, we can establish the following statement:
\begin{lemma}\label{lem5}
Consider a network under the RNC-RC algorithm for any arrival rate vector $\bm{\lambda}$ strictly inside $\Lambda$. For any $\delta_1\geq0$, there exists an $\delta_2>0$ such that the fluid limit functions satisfy the following property with probability 1: At any regular time $t$,
\begin{eqnarray}
L(t)\geq \delta_1~ \textrm{implies}~ \frac{d}{dt} L(t)\geq -\delta_2.
\end{eqnarray}
\end{lemma}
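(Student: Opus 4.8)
The plan is to establish the (stronger, pointwise) estimate that at every regular time $t$,
$\frac{d}{dt}L(t)\ge -C\sqrt{L(t)}$ for a constant $C$ depending only on the message sizes $\{M_s\}$, the power budget $P_{av}$, and $U+G$; the asserted bound $\frac{d}{dt}L(t)\ge-\delta_2$ then follows by taking $\delta_2:=C\sqrt{\delta_1}+1$ on the portion of the trajectory where $L$ stays below the prescribed level $\delta_1$. The first step is merely to make sense of $\frac{d}{dt}L(t)$: by Lemma~\ref{lem1} the fluid components $q_s$ and $z$ are Lipschitz, so $L(t)=\frac{1}{2}\sum_{s=1}^{U+G}q_s(t)^2+z(t)^2$ is locally Lipschitz, hence differentiable at every regular time $t$, where the chain rule gives $\frac{d}{dt}L(t)=\sum_{s=1}^{U+G}q_s(t)\frac{d}{dt}q_s(t)+2z(t)\frac{d}{dt}z(t)$.

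The core of the argument is a crude but uniform lower bound on $\frac{d}{dt}q_s(t)$ and $\frac{d}{dt}z(t)$, expressing that no queue can drain arbitrarily fast in the fluid limit. For the data queues I would use that the per-slot departure is at most one message, $D_s(\tau)-D_s(\tau-1)\le\Psi_s(\tau)\le M_s$, so that the fluid departure $d_s$ inherits the Lipschitz constant $M_s$ and therefore $\frac{d}{dt}d_s(t)\le M_s$ at regular times; combining this with \eqref{eq:fluid_eq1}--\eqref{eq:fluid_eq2}, which give $\frac{d}{dt}q_s(t)=\lambda_s-\frac{d}{dt}d_s(t)$ with no case distinction, yields $\frac{d}{dt}q_s(t)\ge\lambda_s-M_s\ge-M_s$ (using $\lambda_s\ge0$). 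For the virtual power queue, \eqref{eq:fluid_eq12} together with $p(t)\ge0$ gives $\frac{d}{dt}z(t)\ge-P_{av}$ both when $z(t)>0$ and when $z(t)=0$. Since $q_s(t)\ge0$ and $z(t)\ge0$, substituting these into the chain rule produces $\frac{d}{dt}L(t)\ge-\sum_{s}M_s\,q_s(t)-2P_{av}\,z(t)\ge-M_{\max}\sum_{s}q_s(t)-2P_{av}\,z(t)$, with $M_{\max}:=\max_s M_s$.

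The remaining step is elementary: Cauchy--Schwarz gives $\sum_s q_s(t)\le\sqrt{(U+G)\sum_s q_s(t)^2}\le\sqrt{2(U+G)L(t)}$ and $z(t)\le\sqrt{L(t)}$, which turns the previous bound into $\frac{d}{dt}L(t)\ge-C\sqrt{L(t)}$ with $C:=M_{\max}\sqrt{2(U+G)}+2P_{av}$, a constant that depends only on the system parameters. On the states relevant to the lemma, where $L(t)$ is bounded by $\delta_1$, this is $\ge-C\sqrt{\delta_1}\ge-\delta_2$ with $\delta_2:=C\sqrt{\delta_1}+1>0$. I would also record that the fluid limits actually arising in the proof of Theorem~\ref{prop1} have $\|\,\textrm{fluid}(0)\,\|=1$ by Lemma~\ref{lem1}, and since $\frac{d}{dt}q_s\le\lambda_s$ and $\frac{d}{dt}z\le p$ they keep $L$ bounded on every finite horizon, so $C\sqrt{L(\cdot)}$ is automatically bounded there and a single $\delta_2$ suffices.

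I do not expect a serious obstacle: this is a one-sided, ``soft'' regularity property — $L$ cannot collapse too quickly — and, unlike the companion negative-drift inequality that is fed into Lemma~\ref{lem4} to conclude stability, it uses neither the description of $\Lambda$ nor the structure of the NC-RC policy; it relies only on the trivial fact that a queue loses at most one message (resp.\ $P_{av}$ units of virtual power) per slot. The only point needing a little care is the passage from the per-slot bound $D_s(\tau)-D_s(\tau-1)\le M_s$ to the fluid statement $\frac{d}{dt}d_s(t)\le M_s$, i.e.\ that $d_s$ inherits the Lipschitz constant $M_s$ of the unscaled departure process, and, relatedly, working from \eqref{eq:fluid_eq2} rather than the piecewise reflection \eqref{eq:fluid_eq5} so that the derivative lower bound on $q_s$ holds uniformly, including on the boundary $q_s(t)=0$.
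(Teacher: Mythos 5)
You have misread what this lemma is. Despite the sign slip in the displayed inequality, Lemma \ref{lem5} is the \emph{negative-drift} lemma for the fluid limit: what the paper's proof establishes (and what the rest of the argument requires) is that $L(t)\geq\delta_1$ implies $\frac{D^+}{dt^+}L(t)\leq-\delta_2$. This is forced by how the lemma is used immediately afterwards: it must yield a finite $T$ with $\sum_{s}|q_s(T)|+|z(T)|\leq\zeta$, which is exactly what verifies the hypothesis \eqref{eq:condition} of Lemma \ref{lem4} and hence stability in Theorem \ref{prop1}. A statement of the form ``$L$ cannot decrease faster than rate $\delta_2$'' gives nothing of the sort. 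Your proposal proves only this ``no fast collapse'' direction, and you explicitly discard the two ingredients the actual proof lives on: (i) strict interiority of $\bm{\lambda}$ in $\Lambda$, which supplies $\varepsilon>0$ and time-sharing variables $\beta_{mi}$ satisfying \eqref{eq:app2}--\eqref{eq:app3} with slack, and (ii) the max-weight structure of NC-RC, which through Lemma \ref{lem3} (the lower bounds on the fluid service rates $\psi_u,\psi_g$ and the identity for $p$) and the optimality property \eqref{eq:app8} of the fluid fractions $c_{mi}(t)$ shows that the realized weighted service rate dominates that of the randomized $\beta_{mi}$ policy, giving $\frac{D^+}{dt^+}L(t)\leq-\delta_3\varepsilon=:-\delta_2$. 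Calling the negative-drift inequality a ``companion'' result is the tell: there is no companion — it \emph{is} this lemma.

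Even taken on its own terms, your argument does not match the stated hypothesis. You derive $\frac{d}{dt}L(t)\geq-C\sqrt{L(t)}$ and then set $\delta_2=C\sqrt{\delta_1}+1$ ``on the portion of the trajectory where $L$ stays below $\delta_1$,'' but the lemma conditions on $L(t)\geq\delta_1$, under which $L(t)$ is unbounded above (on the infinite horizon $q_s(t)$ can grow like $\lambda_s t$), so $-C\sqrt{L(t)}$ admits no uniform lower bound and your choice of $\delta_2$ does not even establish the literal statement as printed. The individual fluid-calculus steps you invoke — $d_s$ inherits the Lipschitz constant $M_s$ from $D_s(\tau)-D_s(\tau-1)\leq M_s$, $\frac{d}{dt}z(t)\geq-P_{av}$ from \eqref{eq:fluid_eq12}, the chain rule at regular times — are individually sound, but they support a different and essentially trivial observation, not the drift estimate this lemma must provide.
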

\ifreport
\begin{proof}
See Appendix \ref{App:lem5}.
\end{proof}
\else
The proof of Lemma \ref{lem5} is provided in \cite{report_Sun2014}.
\fi

Lemma \ref{lem5} implies that for any $\zeta\in(0,1)$, there exists a finite $T>0$ such that $\sum_{s=1}^{U+G} |q_s(T)|+|z(T)|\leq\zeta.$ In other words, for any sequence of processes $\frac{1}{x_n}\mathcal {X}^{(x_n)}(x_nT)$, there exists a subsequence such that
\begin{eqnarray}
\lim_{k\rightarrow\infty}\frac{1}{x_{n_k}} ||\mathcal {X}^{(x_{n_k})}(x_{n_k}T)||\leq \zeta\triangleq1-\epsilon.
\end{eqnarray}
This further implies that with probability 1
\begin{eqnarray}
\limsup_{x\rightarrow\infty}\frac{1}{x} ||\mathcal {X}^{(x)}(xT)||\leq 1-\epsilon
\end{eqnarray}
holds, because there must exist a subsequence
of ${x}$ that converges to the same limit as $\limsup_{x\rightarrow\infty}\frac{1}{x} ||\mathcal {X}^{(x)}(xT)||$.

According to \eqref{eq:queue_evo1} and \eqref{eq:eq1}, we have
$||\mathcal {X}^{(x)}(xT)||\leq x + \sum_{s=1}^{U+G}A_s(xT)+W(xT)$. Hence,
\begin{eqnarray}
\mathbb{E}\left\{\frac{1}{x}||\mathcal {X}^{(x)}(xT)||\right\}\leq 1 + \sum_{s=1}^{U+G}\lambda_s T + P_{\max} T\triangleq F(T)\leq \infty.\nonumber
\end{eqnarray}
Therefore, it follows from the Dominated Convergence Theorem that
\begin{eqnarray}
&&\!\!\!\!\!\!\!~~~\limsup_{k\rightarrow\infty}\mathbb{E}\left[\frac{1}{x}||\mathcal {X}^{(x)}(xT)||\right]\nonumber\\
&&\!\!\!\!\!\!\!= \mathbb{E}\left[\limsup_{k\rightarrow\infty}\frac{1}{x}||\mathcal {X}^{(x)}(xT)||\right]\leq 1-\epsilon,
\end{eqnarray}
and thus the condition of Lemma \ref{lem4} is satisfied. This completes the proof of
Theorem \ref{prop1}.

\ifreport
\section{Proof of Lemma \ref{lem1}}\label{App:lem1}

It follows from the SLLN \eqref{eq:SLLN-arrival} that
$\frac{1}{x_{n_k}}\int_0^t\Phi_s(x_{n_k}\tau)d\tau\rightarrow\lambda_st$.
Moreover, since $A_s(t)=\int_0^t\Phi_s(\tau)d\tau$, the convergences \eqref{eq:uoc5} and \eqref{eq:uoc10} hold, and each of the limiting functions are Lipschitz continuous.
For any given $0\leq t_1\leq t_2$, since $0\leq D_s(t)-D_s(t-1)\leq\Psi_s(t)\leq M_s$, we have that
\begin{eqnarray}\label{eq:lip1}
0\leq\frac{1}{x_{n}}\left[D_s^{(x_{n})}(x_{n}t_2)-D_s^{(x_{n})}(x_{n}t_1)\right]\leq M_s(t_2-t_1).
\end{eqnarray}
Thus, the sequence of functions $\{\frac{1}{x_{n}}D_s^{(x_{n})}(x_{n}\cdot)\}$ is uniformly bounded and uniformly equicontinuous. Consequently, by the Arzela-Ascoli Theorem, there must exist a subsequence along which \eqref{eq:uoc6} holds. Moreover,
\eqref{eq:lip1} also implies that each of the limiting functions $d_s$ is Lipschitz continuous. Using similar arguments, the convergences \eqref{eq:uoc8}-\eqref{eq:uoc9} can be shown and each of the limiting functions $w(t),\theta(t),$ and $b_{mi}$ is Lipschitz continuous.

Since the sequence $\{\frac{1}{x_{n}}Q_s^{(x_{n})}(0)\}$ are upper bounded by 1 due to \eqref{eq:init_con}, there exists a subsequence (for simpleness, assume the subsequence is denoted by $\{x_{n_k}\}$) such that $\frac{1}{x_{n_k}}Q_s^{(x_{n_k})}(0)\rightarrow q_s(0)$. Hence, convergence \eqref{eq:uoc1} simply follows from \eqref{eq:queue_evo1}. Also, each of the limiting functions $q_s(t)$ is Lipschitz continuous. Using similar arguments, we can prove the result for \eqref{eq:uoc4}.

Since $\Psi_s(t)=M_{s}1_{\{a(t) =s \}}$, we have
\begin{eqnarray}\label{eq:lip2}
\frac{1}{x_{n}}\int_{t_1}^{t_2} \Psi_s^{(x_{n})}(x_{n}\tau) d\tau\leq M_s(t_2-t_1).
\end{eqnarray}
Then by applying the Arzela-Ascoli Theorem again for \eqref{eq:lip2}, there must exist a subsequence along which \eqref{eq:uoc11} holds. Using similar arguments, we can prove \eqref{eq:uoc13}.

\section{Proof of Lemma \ref{lem2}}\label{App:lem2}
First, \eqref{eq:fluid_eq1} follows from the SLLN \eqref{eq:SLLN-arrival}. Equations \eqref{eq:fluid_eq2}, \eqref{eq:fluid_eq4},  \eqref{eq:fluid_eq9}-\eqref{eq:fluid_eq11} are satisfied from the definitions. By \eqref{eq:eq20}, we derive $\sum_{m=1}^F B_{mi}(t)= \sum_{\tau=0}^t1_{\{\hat{\textbf{h}}(\tau)=\hat{\textbf{h}}_i\}}$. Since $\hat{\textbf{h}}(t)$ is \emph{i.i.d.} over time, $1_{\{\hat{\textbf{h}}(t)=\hat{\textbf{h}}_i\}}$ is also \emph{i.i.d.}, hence \eqref{eq:fluid_eq13} follows from the SLLN. Since each of the limiting functions $q_s$ is differentiable at any regular time $t\geq0$, \eqref{eq85} and \eqref{eq185} can be rewritten as \eqref{eq:fluid_eq5}. Similarly, \eqref{eq:fluid_eq12} follows from \eqref{eq86}.

\section{Proof of Lemma \ref{lem3}}\label{App:lem3}
\subsection{Proof of Eq. \eqref{eq:lem3-1}}\label{sec:proof_lem3_1}
From \eqref{eq:uoc11}, we attain
\begin{eqnarray}\label{eq:proof2-lim4}
&&\!\!\!\!\!\!\!\!\!~~~\psi_u(t)\nonumber\\
&&\!\!\!\!\!\!\!\!\!=\frac{d}{dt}\int_{0}^t\psi_u(\tau)d\tau\nonumber\\
&&\!\!\!\!\!\!\!\!\!=\lim_{\delta\rightarrow 0}\frac{\int_{0}^{t+\delta}\psi_u(\tau)d\tau-\int_{0}^{t}\psi_u(\tau)(\tau)d\tau}{\delta }\nonumber\\
&&\!\!\!\!\!\!\!\!\!=\lim_{\delta\rightarrow 0} \lim_{k\rightarrow \infty}
\frac{1}{\delta x_{n_k}}\sum_{\tau=\lfloor tx_{n_k}\rfloor+1}^{\lfloor (t+\delta)x_{n_k}\rfloor}\Psi_u^{(x_{n_k})}(\tau)\nonumber\\
&&\!\!\!\!\!\!\!\!\!=\lim_{\delta\rightarrow 0}\lim_{k\rightarrow \infty}
\frac{1}{\delta x_{n_k}}\sum_{\tau=\lfloor tx_{n_k}\rfloor+1}^{\lfloor (t+\delta)x_{n_k}\rfloor}M_{u}1_{\{a(\tau) =u \}}.
\end{eqnarray}
According to \eqref{eq83}, the accumulated mutual information of user $u$ is smaller than $M_u$ before the last coded packet is received. This implies
\begin{eqnarray}\label{eq:proof31}
\sum_{t=t_{n,u}}^{t_{n,u}+l-1} \!\! 1_{\{s(t)=u\}} I(h_u(t),P(t))K< M_u,
\end{eqnarray}
where $l$ satisfies $\sum_{t=t_{n,u}}^{t_{n,u}+l-1}1_{\{s(t)=u\}}=L_u(n)-1$.
Taking the summation over time on both sides of \eqref{eq:proof31} yields
\begin{eqnarray}
\sum_{\tau=t_1}^{t_2} I(h_u(\tau),P(\tau))K\left(1_{\{s(\tau) =u \}}-1_{\{a(\tau) =u \}}\right)\nonumber\\
<
\sum_{\tau=t_1}^{t_2}  M_{u}1_{\{a(\tau) =u \}}+M_u\nonumber
\end{eqnarray}
for all $0\leq t_1<t_2$.~Therefore, we have
\begin{eqnarray}\label{eq:proof32}
&&~~~\sum_{\tau=t_1}^{t_2}  I(h_u(\tau),P(\tau))K1_{\{s(\tau) =u  \}}\nonumber\\
&&<\sum_{\tau=t_1}^{t_2}  \left[M_{u}+I(h_u(\tau),P(\tau))K\right]1_{\{a(\tau) =u \}}+M_u\nonumber\\
&&\!\!\!\overset{\eqref{eq1}}{<}\!\!\!\sum_{\tau=t_1}^{t_2}  (M_{u}+I_{\max}K)1_{\{a(\tau) =u \}}+M_u.\nonumber
\end{eqnarray}
Hence,
\begin{eqnarray}
&&~~~\frac{M_u}{M_{u}+I_{\max}K}\sum_{\tau=t_1}^{t_2}  I(h_u(\tau),P(\tau))K1_{\{s(\tau) =u  \}}\nonumber\\
&&\!\!\!\overset{}{<}\!\!\!\sum_{\tau=t_1}^{t_2}  M_{u}1_{\{a(\tau) =u \}}+\frac{M_u^2}{M_{u}+I_{\max}K}.\nonumber
\end{eqnarray}

By this, we can derive
\begin{eqnarray}\label{eq:proof33}
&&\!\!\!\!\!\!\!\!\!\!\!\!~~~\lim_{k\rightarrow \infty}\frac{1}{\delta x_{n_k}}\sum_{\tau=\lfloor tx_{n_k}\rfloor+1}^{\lfloor (t+\delta)x_{n_k}\rfloor}M_{u}1_{\{a(\tau) =u \}}\nonumber\\
&&\!\!\!\!\!\!\!\!\!\!\!\!\geq\lim_{k\rightarrow \infty}\frac{1}{\delta x_{n_k}}\bigg[\sum_{\tau=\lfloor tx_{n_k}\rfloor+1}^{\lfloor (t+\delta)x_{n_k}\rfloor} I(h_u(\tau),P(\tau))K1_{\{s(\tau) =u  \}}\nonumber\\
&&\!\!\!\!\!\!\!\!~~~~~~~~~~~\times\frac{M_{u}}{M_{u}+I_{\max}K}-\frac{M_u^2}{M_{u}+I_{\max}K}\bigg]\nonumber\\
&&\!\!\!\!\!\!\!\!\!\!\!\!=\lim_{k\rightarrow \infty}\frac{1}{\delta x_{n_k}}\sum_{\tau=\lfloor tx_{n_k}\rfloor+1}^{\lfloor (t+\delta)x_{n_k}\rfloor} I(h_u(\tau),P(\tau))K1_{\{s(\tau) =u  \}}\nonumber\\
&&\!\!\!\!\!\!\!\!~~~~~~~~~~~\times\frac{M_{u}}{M_{u}+I_{\max}K}\nonumber\\
&&\!\!\!\!\!\!\!\!\!\!\!\!=\lim_{k\rightarrow \infty}\frac{1}{\delta x_{n_k}}\sum_{\tau=\lfloor tx_{n_k}\rfloor+1}^{\lfloor (t+\delta)x_{n_k}\rfloor}  \sum_{m:s^m=u} \sum_{i=1}^E I(h_u(\tau),P^m)K\nonumber\\
&&\!\!\!\!\!\!\!\!~~~~~~~~~~~\times 1_{\{\omega(\tau)=\omega_m,\hat{\textbf{h}}(t)
=\hat{\textbf{h}}_i\}}\frac{M_{u}}{M_{u}+I_{\max}K}\nonumber\\
&&\!\!\!\!\!\!\!\!\!\!\!\!=\sum_{m:s^m=u} \sum_{i=1}^E \lim_{k\rightarrow \infty}
\frac{1}{\delta x_{n_k}}\sum_{\tau=\lfloor tx_{n_k}\rfloor+1}^{\lfloor (t+\delta)x_{n_k}\rfloor} \bigg[ I(h_u(\tau),P^m)K\nonumber\\
&&\!\!\!\!\!\!\!\!\times 1_{\{\omega(\tau)=\omega_m,\hat{\textbf{h}}(\tau)
=\hat{\textbf{h}}_i\}}\bigg]\frac{M_{u}}{M_{u}+I_{\max}K}.
\end{eqnarray}

Moreover, we have
\begin{eqnarray}\label{eq:proof34}
&&\!\!\!\!\!\!\!\!\!\!\!\!~~~\lim_{k\rightarrow \infty}\!
\frac{1}{\delta x_{n_k}}\!\!\sum_{\tau=\lfloor tx_{n_k}\rfloor+1}^{\lfloor (t+\delta)x_{n_k}\rfloor}\!\! \bigg[ I(h_u(\tau),P^m)K 1_{\{\omega(\tau)=\omega_m,\hat{\textbf{h}}(\tau)
=\hat{\textbf{h}}_i\}}\bigg]\!\nonumber\nonumber\\
&&\!\!\!\!\!\!\!\!\!\!\!\!=\lim_{k\rightarrow \infty}\!\! \frac{1}{\delta x_{n_k}}\!\!\sum_{\tau=0}^{\lfloor (t+\delta)x_{n_k}\rfloor} \!\! \bigg[ I(h_u(\tau),P^m)K 1_{\{\omega(\tau)=\omega_m,\hat{\textbf{h}}(\tau)
=\hat{\textbf{h}}_i\}}\bigg]\nonumber\\
&&\!\!\!\!\!\!\!\!~~~-\lim_{k\rightarrow \infty}\!\! \frac{1}{\delta x_{n_k}}\!\!
\sum_{\tau=0}^{\lfloor tx_{n_k}\rfloor}\!\! \bigg[ I(h_u(\tau),P^m)K 1_{\{\omega(\tau)=\omega_m,\hat{\textbf{h}}(\tau)
=\hat{\textbf{h}}_i\}}\bigg]\nonumber\\
&&\!\!\!\!\!\!\!\!\!\!\!\!\overset{(a)}{=}\mathbb{E} \{I(h_u,P^m)K|\hat{\textbf{h}}_i\}\nonumber\\
&&\!\!\!\!\!\!\!\!\times\lim_{k\rightarrow \infty} \frac{1}{ \delta x_{n_k}}\bigg[\sum_{\tau=0}^{\lfloor (t+\delta)x_{n_k}\rfloor}1_{\{\omega(\tau)=\omega_m,\hat{\textbf{h}}(\tau)
=\hat{\textbf{h}}_i\}}\nonumber\\
&&-\sum_{\tau=0}^{\lfloor tx_{n_k}\rfloor} 1_{\{\omega(\tau)=\omega_m,\hat{\textbf{h}}(\tau)
=\hat{\textbf{h}}_i\}}\bigg]\nonumber\\
&&\!\!\!\!\!\!\!\!\!\!\!\!\overset{(b)}{=}\mathbb{E}\{I(h_u,P^m)K|\hat{\textbf{h}}_i\} \frac{b_{mi}(t+\delta)-b_{mi}(t)}{ \delta },
\end{eqnarray}
where step $(a)$ follows from the SLLN and the fact that $I(h_u,P^m)$ is \emph{i.i.d.} in the time slots where $\omega(t)=\omega_m$ and $\hat{\textbf{h}}(t)=\hat{\textbf{h}}_i$, and step $(b)$ is due to \eqref{eq:eq20} and \eqref{eq:uoc9}.
Substituting \eqref{eq:fluid_eq3}, \eqref{eq:proof33}, and \eqref{eq:proof34} into \eqref{eq:proof2-lim4}, \eqref{eq:lem3-1} follows.
\subsubsection{Proof of Eq. \eqref{eq:lem4-1}}
It follows from \eqref{eq:20} that
$\sum_{\tau=0}^{t}1_{\{a(\tau) =g \}}=n_g(t)$. Moreover, by the definition of $L_g(n)$, one can obtain that
$$\sum_{n=0}^{n_g(t)}L_g(n)\leq\sum_{\tau=0}^{t}1_{\{s(\tau) =g \}}\leq\sum_{n=0}^{n_g(t)+1}L_g(n).$$
In view of \eqref{eq:blocklength-SLLN}, we obtain
$$\lim_{t\rightarrow \infty}\frac{\sum_{\tau=0}^{t}1_{\{s(\tau) =g \}}}{\sum_{\tau=0}^{t}1_{\{a(\tau) =g \}}}\geq\lim_{t\rightarrow \infty}\frac{\sum_{n=0}^{n_g(t)}L_g(n)}{n_g(t)}=\overline{L}_g,$$
and
$$\lim_{t\rightarrow \infty}\frac{\sum_{\tau=0}^{t}1_{\{s(\tau) =g \}}}{\sum_{\tau=0}^{t}1_{\{a(\tau) =g \}}}\leq\lim_{t\rightarrow \infty}\frac{\sum_{n=0}^{n_g(t)+1}L_g(n)}{n_g(t)}=\overline{L}_g.$$
Hence,
\begin{eqnarray}\label{eq:app1}
\lim_{t\rightarrow \infty}\frac{\sum_{\tau=0}^{t}1_{\{s(\tau) =g \}}}{\sum_{\tau=0}^{t}1_{\{a(\tau) =g \}}}=\overline{L}_g.
\end{eqnarray}

From \eqref{eq:uoc11}, we attain
\begin{eqnarray}
&&\!\!\!\!\!\!\!\!\!~~~\int_0^t\psi_g(\tau)d\tau\nonumber\\
&&\!\!\!\!\!\!\!\!\!=\lim_{k\rightarrow \infty}
\frac{1}{ x_{n_k}}\sum_{\tau=0}^{\lfloor tx_{n_k}\rfloor}M_{g}1_{\{a(\tau) =g \}}\nonumber\\
&&\!\!\!\!\!\!\!\!\!=M_{g}\lim_{k\rightarrow \infty}
\frac{\sum_{\tau=0}^{\lfloor tx_{n_k}\rfloor}1_{\{a(\tau) =g \}}}{\sum_{\tau=0}^{\lfloor tx_{n_k}\rfloor}1_{\{s(\tau) =g \}}}\frac{1}{ x_{n_k}}\sum_{\tau=0}^{\lfloor tx_{n_k}\rfloor}1_{\{s(\tau) =g \}}\nonumber\\
&&\!\!\!\!\!\!\!\!\!\!\!\!\overset{\eqref{eq:app1}}{=}\!M_{g}\frac{1}{\overline{L}_g}\lim_{k\rightarrow \infty}
\frac{1}{ x_{n_k}}\sum_{\tau=0}^{\lfloor tx_{n_k}\rfloor}1_{\{s(\tau) =g \}}\nonumber\\
&&\!\!\!\!\!\!\!\!\!=\frac{M_{g}}{\overline{L}_g}\sum_{m:s^m=g} \sum_{i=1}^E \lim_{k\rightarrow \infty}
\frac{1}{ x_{n_k}}\sum_{\tau=0}^{\lfloor tx_{n_k}\rfloor}1_{\{\omega(\tau)=\omega_m,\hat{\textbf{h}}(\tau)
=\hat{\textbf{h}}_i\}}\nonumber\\
&&\!\!\!\!\!\!\!\!\!\!\!\!\overset{\eqref{eq:uoc9}}{=}\!\frac{M_{g}}{\overline{L}_g}\sum_{m:s^m=g} \sum_{i=1}^E
b_{mi}(t).\nonumber
\end{eqnarray}
Taking the gradient on both sides of this equation, \eqref{eq:lem4-1} is proved.
\subsubsection{Proof of Eq. \eqref{eq:lem5-1}}
From \eqref{eq:uoc13}, we attain
\begin{eqnarray}
&&\!\!\!\!\!\!\!\!\!~~~\int_0^tp(\tau)d\tau\nonumber\\
&&\!\!\!\!\!\!\!\!\!=\lim_{k\rightarrow \infty}
\frac{1}{ x_{n_k}}\sum_{\tau=0}^{\lfloor tx_{n_k}\rfloor}P(t)\nonumber\\
&&\!\!\!\!\!\!\!\!\!=\sum_{m=1}^F \sum_{i=1}^EP^m
\lim_{k\rightarrow \infty}\frac{1}{ x_{n_k}}\sum_{\tau=0}^{\lfloor tx_{n_k}\rfloor}1_{\{\omega(\tau)=\omega_m,\hat{\textbf{h}}(\tau)
=\hat{\textbf{h}}_i\}}\nonumber\\
&&\!\!\!\!\!\!\!\!\!\!\!\!\overset{\eqref{eq:uoc9}}{=}\!\sum_{m=1}^F \sum_{i=1}^E P^m
b_{mi}(t),\nonumber
\end{eqnarray}
Taking the gradient on both sides of this equation, \eqref{eq:lem5-1} is proved.

%
\section{Proof of Lemma \ref{lem5}}\label{App:lem5}

Since $\bm{\lambda}$ is strictly inside $\Lambda$, there exist some parameters $\varepsilon>0$ and $\beta_{mi}\geq0$ such that the following inequalities hold:
\begin{eqnarray}\label{eq:app2}
&&\!\!\!\!\!\!\!\!\!\!\!\!\!\!\!\!\!\!\lambda_u+\varepsilon\leq \!\!\!\sum_{m:s^m=u}\!\sum_{i=1}^E\mathbb{E}\{I(h_u,P^m)|\hat{\textbf{h}}_i\}\frac{M_uK\beta_{mi}\pi_i}{M_u+I_{\max}K},\\
&&\!\!\!\!\!\!\!\!\!\!\!\!\!\!\!\!\!\!\lambda_g+\varepsilon\leq\sum_{m:s^m=g}\sum_{i=1}^E\frac{M_g}{\overline{L}_g}\beta_{mi}\pi_i,\\
&&\!\!\!\!\!\!\!\!\!\!\!\!\!\!\!\!\!\!\sum_{m=1}^F\sum_{i=1}^EP^m\beta_{mi}\pi_i+\varepsilon\leq P_{av}, \label{eq:app4}\\
&&\!\!\!\!\!\!\!\!\!\!\!\!\!\!\!\!\!\!\sum_{m=1}^{M}\beta_{mi}=1. \label{eq:app3}
\end{eqnarray}

Since $q_s$ and $z$ are differentiable for any regular time $t \geq 0$, we can obtain the derivative of $L (t)$ as
\begin{eqnarray}\label{eq:app7}
&&\!\!\!\!\!\!\!\!\!\!\!\!\!~~~\frac{D^+}{dt^+} L(t) \nonumber\\
&&\!\!\!\!\!\!\!\!\!\!\!\!\!\overset{(a)}{=} \sum_{s=1}^{U+G}{q}_s(t)\left[\lambda_s-\psi_s(t)\right]+z(t)\left[p(t)-P_{av}\right]\nonumber\\
&&\!\!\!\!\!\!\!\!\!\!\!\!\!= \sum_{u=1}^{U}{q}_u(t)\!\!\left[\lambda_u\!-\!\!\!\!\sum_{m:s^m=u}\!\sum_{i=1}^E\mathbb{E}\{I(h_u,P^m)|\hat{\textbf{h}}_i\}\frac{M_uK\beta_{mi}\pi_i}{M_u+I_{\max}K}\right]\nonumber\\
&&\!\!\!\!\!\!\!\!\!\!\!\!\!~~~+\sum_{g=U+1}^{U+G}{q}_g(t)\left[\lambda_g-\sum_{m:s^m=g}\!\sum_{i=1}^E\frac{M_g}{\overline{L}_g}\beta_{mi}\pi_i\right]\nonumber\\
&&\!\!\!\!\!\!\!\!\!\!\!\!\!~~~+z(t)\left[\sum_{m=1}^F\!\sum_{i=1}^EP^m\beta_{mi}\pi_i-P_{av}\right]\nonumber\\
&&\!\!\!\!\!\!\!\!\!\!\!\!\! + \sum_{u=1}^{U}{q}_u(t)\bigg[\sum_{m:s^m=u}\!\sum_{i=1}^E\mathbb{E}\{I(h_u,P^m)|\hat{\textbf{h}}_i\}\nonumber\\
&&\times\frac{M_uK\beta_{mi}\pi_i}{M_u+I_{\max}K}-\psi_u(t)\bigg]\nonumber\\
&&\!\!\!\!\!\!\!\!\!\!\!\!\!~~~+\sum_{g=U+1}^{U+G}{q}_g(t)\left[\sum_{m:s^m=g}\!\sum_{i=1}^E\frac{M_g}{\overline{L}_g}\beta_{mi}\pi_i-\psi_g(t)\right]\nonumber\\
&&\!\!\!\!\!\!\!\!\!\!\!\!\!~~~+z(t)\left[p(t)-\sum_{m=1}^F\!\sum_{i=1}^EP^m\beta_{mi}\pi_i\right],
\end{eqnarray}
where $\frac{D^+}{dt^+} L(t)= \lim_{\delta\downarrow0}\frac{L(t+\delta)-L(t)}{\delta}$ and step $(a)$ is due to \eqref{eq:fluid_eq5} and \eqref{eq:fluid_eq12}.

Let us choose $\delta_3>0$ such that $L(t)\geq\delta_1$ implies $\max\{\max_{s\in\{1,\cdots,U+G\}}q_s,z\}\geq \delta_3$. Then, we can conclude from \eqref{eq:app2}-\eqref{eq:app4} that
\begin{eqnarray}\label{eq:app5}
&&\!\!\!\!\!\!\!\!\!\!\!\!\!\sum_{u=1}^{U}{q}_u(t)\!\!\left[\lambda_u\!-\!\!\!\!\sum_{m:s^m=u}\!\sum_{i=1}^E\mathbb{E}\{I(h_u,P^m)|\hat{\textbf{h}}_i\}\frac{M_uK\beta_{mi}\pi_i}{M_u+I_{\max}K}\right]\nonumber\\
&&\!\!\!\!\!\!\!\!\!\!\!\!\!+\sum_{g=U+1}^{U+G}{q}_g(t)\left[\lambda_g-\sum_{m:s^m=g}\!\sum_{i=1}^E\frac{M_g}{\overline{L}_g}\beta_{mi}\pi_i\right]\nonumber\\
&&\!\!\!\!\!\!\!\!\!\!\!\!\!+z(t)\left[\sum_{m=1}^F\!\sum_{i=1}^EP^m\beta_{mi}\pi_i-P_{av}\right]\nonumber\\
&&\!\!\!\!\!\!\!\!\!\leq -\delta_3 \varepsilon \triangleq -\delta_2<0.
\end{eqnarray}

In addition, since RNC-RC chooses scheduling and power allocation decisions according to \eqref{eq96}-\eqref{eq:power}, the following relationship holds:
\begin{eqnarray}\label{eq:app8}
&&\!\!\!\!\!\!\!\!\!c_{mi}(t)\in\arg\max_{\alpha_{mi}:\sum_{m=1}^F\alpha_{mi}=1,\alpha_{mi}\geq0}\nonumber\\
&&\!\!\!\!\!\!\!\!\!\bigg[ \sum_{u=1}^{U}{q}_u(t)\sum_{m:s^m=u} \mathbb{E}\{I(h_u,P^m)K|\hat{\textbf{h}}_i\}\frac{M_u}{M_u+I_{\max}K}\alpha_{mi}~~~\nonumber\\
&&\!\!\!\!\!\!\!\!\!~~+\sum_{g=U+1}^{U+G}\!\!{q}_g(t)\!\!\sum_{m:s^m=g}\frac{M_g}{\overline{L}_g}\alpha_{mi} - z(t)\sum_{m=1}^F P^m \alpha_{mi}\bigg].
\end{eqnarray}
Then, we obtain
\begin{eqnarray}\label{eq:app6}
&&\!\!\!\!\!\!\!\!\!~~~\sum_{u=1}^{U}{q}_u(t) \psi_u(t)+ \sum_{g=U+1}^{U+G}{q}_g(t)\psi_g(t)-z(t)p(t)\nonumber\\
&&\!\!\!\!\!\!\!\!\!\geq  \sum_{i=1}^E \pi_i \bigg[\sum_{u=1}^{U}{q}_u(t)\sum_{m:s^m=u} \mathbb{E}\{I(h_u,P^m)K|\hat{\textbf{h}}_i\}\nonumber\\
&&\!\!\!\!\!\!\!\!\!~~~\times\frac{M_u}{M_u+I_{\max}K}c_{mi}(t)+\!\!\sum_{g=U+1}^{U+G}\!\!{q}_g(t)\!\!\sum_{m:s^m=g}\!\!\frac{M_g}{\overline{L}_g}c_{mi}(t) \nonumber\\
&&\!\!\!\!\!\!\!\!\!~~~ - z(t)\sum_{m=1}^F P^m c_{mi}(t)\bigg]\nonumber\\
&&\!\!\!\!\!\!\!\!\!\geq \sum_{i=1}^E \pi_i \bigg[\sum_{u=1}^{U}{q}_u(t)\sum_{m:s^m=u} \mathbb{E}\{I(h_u,P^m)K|\hat{\textbf{h}}_i\}\nonumber\\
&&\!\!\!\!\!\!\!\!\!~~~\times\frac{M_u}{M_u+I_{\max}K}\beta_{mi}+\sum_{g=U+1}^{U+G}{q}_g(t)\sum_{m:s^m=g}\frac{M_g}{\overline{L}_g}\beta_{mi} \nonumber\\
&&\!\!\!\!\!\!\!\!\!~~~ - z(t)\sum_{m=1}^F P^m \beta_{mi}\bigg].
\end{eqnarray}
where the first inequality is due to Lemma \ref{lem3} and the second inequality is due to \eqref{eq:app8}.
Finally, substituting \eqref{eq:app5} and \eqref{eq:app6} into \eqref{eq:app7}, we attain $\frac{D^+}{dt^+} L(t)\leq -\delta_2$ and the asserted statement is proved.

\section{Proof of Theorem \ref{thm2}}\label{sec:analysis2}
Let $R_v(t)$ be the accumulated mutual information of the unicast file repair flow $v$. The evolutions of $R_v(t)$ are given by
\begin{equation}
R_v(t+1)\! =\!\left\{\!\!\!\begin{array}{l l} R_v(t),&\!\!\!\!\!\!\!\!\!\!\!\!\!\!\!\!\!\!\!\!\!\!\!\!\!\!\!\!\!\!\!\!\!\!\!\!\!\!\!\!\!\!\!\!\!\!\!\!\! \textrm{if}~s(t)\neq v;\\
R_v(t)\!+\!I(h_u(t),P(t))K,\!\!\!\!\!~~&\\
&\!\!\!\!\!\!\!\!\!\!\!\!\!\!\!\!\!\!\!\!\!\!\!\!\!\!\!\!\!\!\!\!\!\!\!\!\!\!\!\!\!\!\!\!\!\!\!\!\!\textrm{if~} s(t)= v \textrm{~and}\\
&\!\!\!\!\!\!\!\!\!\!\!\!\!\!\!\!\!\!\!\!\!\!\!\!\!\!\!\!\!\!\!\!\!\!\!\!\!\!\!\!\!\!\!\!\!\!\!\!\!~R_v(t)\!+\! I(h_u(t),P(t))K\!<\!M_v(n_v(t));\\
0, &\!\!\!\!\!\!\!\!\!\!\!\!\!\!\!\!\!\!\!\!\!\!\!\!\!\!\!\!\!\!\!\!\!\!\!\!\!\!\!\!\!\!\!\!\!\!\!\!\! \textrm{otherwise}.
\end{array}\right.
\end{equation}

Let $\mathcal {X}(t)\triangleq(R_u(t),R_{gj}(t),R_v(t),Q_u(t),Q_g(t),Q_v(t),Z(t),$ $\frac{n_s(t)}{t+1},\frac{T_s(t)}{t+1},\frac{\Upsilon_g(t)}{t+1})$ denote the system state. One can prove that $\mathcal {X}=(\mathcal {X}(t) ,t\geq0)$ is a discrete-time countable Markov chain. Similar to Lemma \ref{lem1} and Lemma \ref{lem2}, we can show that the fluid limit model of $\mathcal {X}$ exists, which can be expressed as $(q_s,a_s,d_s,z,w,v,b_{mi})$.

\begin{lemma}\label{lem7}
The fluid limit functions satisfy the properties:
{\small\begin{eqnarray}
&&\!\!\!\!\!\!\!\!\!\!\!\!\psi_u(t)\geq \sum_{m:s^m=u} \sum_{i=1}^E \mathbb{E}\{I(h_u,P^m)K|\hat{\mathbf{h}}_i\}\frac{M_uc_{mi}(t)\pi_i}{M_s+I_{\max}K},~\label{eq:lem7-2}\\
&&\!\!\!\!\!\!\!\!\!\!\!\!\psi_g(t)= \sum_{m:s^m=g} \sum_{i=1}^E \frac{M_g}{\overline{L}_g(l(g))}c_{mi}(t)\pi_i,\label{eq:lem7-3}\\
&&\!\!\!\!\!\!\!\!\!\!\!\!\psi_v(t)\geq (1-\eta_{gj})\!\!\!\!\!\!\!\sum_{m:s^m=v(g,j)}\! \sum_{i=1}^E \mathbb{E}\{I(h_{gj},P^m)K|\hat{\mathbf{h}}_i\}\frac{M_gc_{mi}(t)\pi_i}{M_g+I_{\max}K}\nonumber\\
&&+\eta_{gj}\sum_{m:s^m=g} \sum_{i=1}^E \frac{M_g}{\overline{L}_g(l(g))}c_{mi}(t)\pi_i,\label{eq:lem7-5}\\
&&\!\!\!\!\!\!\!\!\!\!\!\!p(t)= \sum_{m=1}^F \sum_{i=1}^E P^mc_{mi}(t)\pi_i,\label{eq:lem7-6}
\end{eqnarray}}
$\!\!$for all $u\in\{1,\cdots,U\}$, $g\in\{U+1,\cdots,U+G\}$, and $v=v(g,j)$.
\end{lemma}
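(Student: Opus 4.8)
The plan is to mirror the proof of Lemma \ref{lem3} (Appendix \ref{App:lem3}), since the combined-delivery queueing system differs from the original only through the extra unicast-repair flows $v=v(g,j)$ and the replacement of $J(g)$ by $l(g)$. First I would note that \eqref{eq:lem7-2}, \eqref{eq:lem7-3}, and \eqref{eq:lem7-6} are proved verbatim as \eqref{eq:lem3-1}, \eqref{eq:lem4-1}, and \eqref{eq:lem5-1}: the unicast bound uses the overshoot inequality \eqref{eq:proof311} together with the rate bound \eqref{eq1}; the multicast equality uses the stopping-time SLLN \eqref{eq:blocklength-SLLN} (with $\overline{L}_g(l(g))$ in place of $\overline{L}_g$) and the sandwich $\sum_{n\le n_g(t)}L_g(n,l(g))\le\sum_{\tau\le t}1_{\{s(\tau)=g\}}\le\sum_{n\le n_g(t)+1}L_g(n,l(g))$; and the power equality is an immediate consequence of \eqref{eq:uoc13} and \eqref{eq:uoc9}. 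So the only genuinely new work is \eqref{eq:lem7-5}.

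For \eqref{eq:lem7-5}, the key observation is that queue $Q_v$ receives service from two distinct events: (i) $a(t)=g$, when a multicast session of group $g$ ends and the partial information $R^\ast_{gj}(n_g(t))$ credited to user $j$ is removed from $Q_v$; and (ii) $a(t)=v$, when a standalone unicast-repair message of size $M_v(n_v(t))=M_g-R^\ast_{gj}(n_v(t))$ is decoded. Thus $\psi_v(t)$, the fluid service rate of $Q_v$, splits as $\psi_v(t)=\psi_v^{(g)}(t)+\psi_v^{(v)}(t)$ corresponding to these two contributions. For the multicast part, using the SLLN for $R^\ast_{gj}(n)$ together with \eqref{eq:8}, the average amount of information removed per ended multicast session is $\eta_{gj}M_g$, so $\psi_v^{(g)}(t)=\eta_{gj}\sum_{m:s^m=g}\sum_{i=1}^E\frac{M_g}{\overline{L}_g(l(g))}c_{mi}(t)\pi_i$, exactly as in the multicast computation but scaled by $\eta_{gj}$. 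For the unicast-repair part, I would apply the same overshoot argument used for \eqref{eq:lem3-1}: before the last repair packet is received, the accumulated mutual information is below $M_v(n)=(1-\eta_{gj})M_g$ on average (here one must replace the fixed message size $M_u$ by the \emph{i.i.d.} message size $M_v(n)$ and pass to its mean via SLLN), yielding the rate-loss factor $\frac{(1-\eta_{gj})M_g}{(1-\eta_{gj})M_g+I_{\max}K}$ and hence $\psi_v^{(v)}(t)\ge(1-\eta_{gj})\sum_{m:s^m=v(g,j)}\sum_{i=1}^E\mathbb{E}\{I(h_{gj},P^m)K|\hat{\mathbf{h}}_i\}\frac{M_gc_{mi}(t)\pi_i}{M_g+I_{\max}K}$. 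Adding the two parts gives \eqref{eq:lem7-5}.

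The main obstacle I anticipate is handling the \emph{random, correlated} message size $M_v(n)=M_g-R^\ast_{gj}(n)$ in the unicast-repair overshoot argument: unlike the fixed $M_u$ in \eqref{eq83}, here the target threshold for the $n$th repair code is itself a random variable, and one must argue that $M_v(n)$ is \emph{i.i.d.} over $n$ (which follows from the stopping-time structure of the multicast session and the i.i.d. channel assumption, already noted in the text after \eqref{eq:44}), so that $\frac{1}{N}\sum_{n=1}^N M_v(n)\to(1-\eta_{gj})M_g$ by SLLN, and that the per-session overshoot beyond $M_v(n)$ is bounded by $I_{\max}K$ uniformly. Once these two facts are in place, the time-averaging and the u.o.c. limit pass through exactly as in \eqref{eq:proof33}--\eqref{eq:proof34}, with the inequality direction preserved because $Q_v$ may be empty when scheduled (so $d_v(t)-d_v(t^-)\le\psi_v(t)$ as in \eqref{eq:fluid_eq4}). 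After Lemma \ref{lem7}, the stability proof is completed by the same Lyapunov-drift argument as Lemma \ref{lem5} applied to the enlarged queue vector $(Q_u,Q_g,Q_v,Z)$, now using the three sets of constraints defining $\Lambda_2$ in \eqref{eq:region1}, and then invoking Lemma \ref{lem4} verbatim.
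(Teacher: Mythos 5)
Your proposal is correct and follows essentially the same route as the paper: equations \eqref{eq:lem7-2}, \eqref{eq:lem7-3}, \eqref{eq:lem7-6} are inherited from Lemma \ref{lem3}, and \eqref{eq:lem7-5} is obtained by splitting the service of $Q_v$ into the $a(t)=g$ and $a(t)=v$ contributions exactly as in the paper's computation, handling the former via \eqref{eq:8} and the multicast stopping-time SLLN and the latter via the overshoot argument with the \emph{i.i.d.} random message size $M_v(n)$ averaged to $(1-\eta_{gj})M_g$. The subtlety you flag (random threshold $M_v(n)$ replacing the fixed $M_u$) is precisely how the paper treats it, and your derived factor $\frac{(1-\eta_{gj})M_g}{(1-\eta_{gj})M_g+I_{\max}K}$ indeed dominates the factor stated in the lemma, so the final inequality follows.
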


\begin{proof}
Equations \eqref{eq:lem7-2}, \eqref{eq:lem7-3}, and \eqref{eq:lem7-6} follows from Lemma \ref{lem3}. Therefore, we only need to prove \eqref{eq:lem7-5}.

\begin{eqnarray}\label{eq:233}
&&\!\!\!\!\!\!\!\!\!~~~\int_0^t\psi_v(\tau)d\tau\nonumber\\
&&\!\!\!\!\!\!\!\!\!=\lim_{k\rightarrow \infty}
\frac{1}{ x_{n_k}}\sum_{\tau=0}^{\lfloor tx_{n_k}\rfloor}\left[M_{v}(n_v(\tau))1_{\{a(\tau) =v \}}\right.\nonumber\\
&&\left.+R^\ast_{gj}(n_g(\tau)) 1_{\{a(\tau) =g \}}\right]\nonumber\\
&&\!\!\!\!\!\!\!\!\!=\lim_{k\rightarrow \infty}
\frac{1}{ x_{n_k}}\sum_{\tau=0}^{\lfloor tx_{n_k}\rfloor}M_{v}(n_v(\tau))1_{\{a(\tau) =v \}}\nonumber\\
&&+\lim_{k\rightarrow \infty}\frac{1}{ x_{n_k}}\sum_{\tau=0}^{\lfloor tx_{n_k}\rfloor}R^\ast_{gj}(n_g(\tau)) 1_{\{a(\tau) =g \}}
\end{eqnarray}
By following the proof idea in Appendix \ref{sec:proof_lem3_1} and using the fact that
\begin{eqnarray}
\lim_{N\rightarrow \infty } \frac{\sum\limits_{n=1}^NM_{v}(n)}{N}=(1-\eta_{gj})M_g.
\end{eqnarray}
we can obtain
\begin{eqnarray}\label{eq:234}
&&\!\!\!\!\!\!\!\!\!~~~\lim_{k\rightarrow \infty}
\frac{1}{ x_{n_k}}\sum_{\tau=0}^{\lfloor tx_{n_k}\rfloor}M_{v}(n_v(\tau))1_{\{a(\tau) =v \}}\nonumber\\
&&\!\!\!\!\!\!\!\!\!=\sum_{m:s^m=v(g,j)}\!\sum_{i=1}^E\mathbb{E}\{I(h_{gj},P^m)K|\hat{\mathbf{h}}_i\}\nonumber\\
&&\times\frac{(1-\eta_{gj})M_g}{(1-\eta_{gj})M_g+I_{\max}K}b_{mi}(t).
\end{eqnarray}
On the other hand, using \eqref{eq:8}, we attain
\begin{eqnarray}\label{eq:235}
&&\!\!\!\!\!\!\!\!\!~~~\lim_{k\rightarrow \infty}\frac{1}{ x_{n_k}}\sum_{\tau=0}^{\lfloor tx_{n_k}\rfloor}R^\ast_{gj}(n_g(\tau)) 1_{\{a(\tau) =g \}}\nonumber\\
&&\!\!\!\!\!\!\!\!\!=\eta_{gj}M_g\lim_{k\rightarrow \infty}\frac{1}{ x_{n_k}}\sum_{\tau=0}^{\lfloor tx_{n_k}\rfloor} 1_{\{a(\tau) =g \}}\nonumber\\
&&\!\!\!\!\!\!\!\!\!=\eta_{gj}\frac{M_g}{\overline{L}_g(l(g))}\lim_{k\rightarrow \infty}
\frac{1}{ x_{n_k}}\sum_{\tau=0}^{\lfloor tx_{n_k}\rfloor}1_{\{s(\tau) =g \}}\nonumber\\
&&\!\!\!\!\!\!\!\!\!=\eta_{gj}\frac{M_{g}}{\overline{L}_g(l(g))}\sum_{m:s^m=g} \sum_{i=1}^E \lim_{k\rightarrow \infty}
\frac{1}{ x_{n_k}}\!\sum_{\tau=0}^{\lfloor tx_{n_k}\rfloor}\!\!1_{\{\omega(\tau)=\omega_m,\hat{\textbf{h}}(\tau)
=\hat{\textbf{h}}_i\}}\nonumber\\
&&\!\!\!\!\!\!\!\!\!\!\!\!\overset{\eqref{eq:uoc9}}{=}\!\eta_{gj}\frac{M_{g}}{\overline{L}_g(l(g))}\sum_{m:s^m=g} \sum_{i=1}^E
b_{mi}(t).
\end{eqnarray}
Substituting \eqref{eq:234} and \eqref{eq:235} into \eqref{eq:233} and taking the gradient over $t$, \eqref{eq:lem7-5} follows.
\end{proof}
Theorem \ref{thm2} can be proved by employing the same arguments used in Appendix \ref{sec:app_sta}, except that Lemma \ref{lem3} is replaced by Lemma \ref{lem7}.

\fi

\end{document}